\theoremstyle{plain}
\newtheorem{theorem}{Theorem}
\newtheorem{proposition}{Proposition}
\newtheorem{lemma}{Lemma}
\theoremstyle{remark}
\newtheorem{assumption}{Assumption}
\begin{document}

\begin{frontmatter}
\title{Treatment Effect Heterogeneity and Importance Measures for Multivariate Continuous Treatments}
\runtitle{Heterogeneous Effects of Multivariate Treatments}

\begin{aug}
\author[A]{\fnms{Heejun}~\snm{Shin}\ead[label=e1]{heejunshin@hsph.harvard.edu}},
\author[B]{\fnms{Antonio}~\snm{Linero}\ead[label=e2]{antonio.linero@austin.utexas.edu}}
\author[A]{\fnms{Michelle}~\snm{Audirac}\ead[label=e3]{maudirac@hsph.harvard.edu}}
\author[A]{\fnms{Kezia}~\snm{Irene}\ead[label=e4]{kezia\_irene@alumni.harvard.edu}}
\author[A]{\fnms{Danielle}~\snm{Braun}\ead[label=e5]{dbraun@mail.harvard.edu}}
\and
\author[C]{\fnms{Joseph}~\snm{Antonelli}\ead[label=e6]{jantonelli@ufl.edu}}
\address[A]{Department of Biostatistics, Harvard T.H. Chan School of Public Health\printead[presep={,\ }]{e1,e3,e4,e5}}
\address[B]{Department of Statistics and Data Science, University of Texas at Austin\printead[presep={,\ }]{e2}}
\address[C]{Department of Statistics, University of Florida\printead[presep={,\ }]{e6}}
\end{aug}

\begin{abstract}
Estimating the joint effect of a multivariate, continuous exposure is crucial, particularly in environmental health where interest lies in simultaneously evaluating the impact of multiple environmental pollutants on health. We develop novel methodology that addresses two key issues for estimation of treatment effects of multivariate, continuous exposures. We use nonparametric Bayesian methodology that is flexible to ensure our approach can capture a wide range of data generating processes. Additionally, we allow the effect of the exposures to be heterogeneous with respect to covariates. Treatment effect heterogeneity has not been well explored in the causal inference literature for multivariate, continuous exposures, and therefore we introduce novel estimands that summarize the nature and extent of the heterogeneity, and propose estimation procedures for new estimands related to treatment effect heterogeneity. We provide theoretical support for the proposed models in the form of posterior contraction rates and show that it works well in simulated examples both with and without heterogeneity. Our approach is motivated by a study of the health effects of simultaneous exposure to the components of PM$_{2.5}$, where we find that the negative health effects of exposure to environmental pollutants are exacerbated by low socioeconomic status, race and age.
\end{abstract}

\begin{keyword}
\kwd{Causal inference}
\kwd{Bayesian nonparametrics} 
\kwd{Environmental mixtures}
\kwd{Treatment effect heterogeneity}
\kwd{Variable importance measures}
\end{keyword}

\end{frontmatter}

\section{Introduction}\label{sec: introduction}

An important scientific question is understanding the effect of a multivariate treatment on an outcome, particularly in environmental health where individuals are exposed simultaneously to multiple pollutants  and it is of interest to understand the  joint impact of each of these pollutants on public health. Further, the effects of each of these pollutants might be heterogeneous with respect to characteristics of the individuals exposed, and it is important to account for and understand this treatment effect heterogeneity. Ignoring this heterogeneity can lead to misspecified outcome models, which lead to biased estimates of the average effect of the treatments on the outcome, and can also mask substantial effects of the treatment on specific subgroups of the population. In this paper, we address this problem by developing methodology for estimating heterogeneous treatment effects of continuous, multivariate treatments.

In the past decade, there has been an increasing interest in analyzing the effects of multivariate exposures in environmental health where the exposures are referred to as environmental mixtures \citep{dominici2010protecting, agier2016systematic, stafoggia2017statistical, gibson2019complex, shin2023spatial}.
Analyzing environmental mixtures is a challenging problem for a number of reasons. For one, the scientific goal is not simply predicting the outcome, but rather understanding the effects of each of the individual exposures, and whether these exposures interact with each other. Therefore, off-the-shelf modern regression strategies such as tree-based models \citep{breiman2001random, chipman2010bart} and Gaussian processes \citep{banerjee2013efficient} do not immediately apply. Consequently, there has been significant interest in developing methods to adapt these models for causal inference \citep{hill2011bayesian, hahn2020bayesian, ray2020semiparametric}. A number of studies in the environmental statistics literature have tailored these methods for multivariate exposures. A common theme among these methods is the use of nonparametric Bayesian models. Gaussian processes are adapted for this purpose in the popular Bayesian kernel machine regression (BKMR, \cite{bobb2015bayesian}) or in related work that explicitly identifies interactions between exposures \citep{ferrari2020identifying}. Related Bayesian approaches use basis function expansions to identify important exposures or interactions among exposures for environmental mixtures \citep{antonelli2020estimating, wei2020sparse, samanta2022estimation}. Other, related approaches have been developed all with the goal of estimating the health effects of environmental mixtures and identifying exposures that affect the outcome \citep{herring2010nonparametric,carrico2015characterization, narisetty2019selection, boss2021hierarchical, ferrari2021bayesian}. 

While the aforementioned approaches are useful for multivariate, continuous exposures, they all implicitly assume that the effect of the environmental exposures is the same across the entire population. This assumption does not hold in the context of ambient air pollution, where  effects have been shown to vary by characteristics such as age, race, and location \citep{wang2020impact, bargagli2020causal}. Treatment effect heterogeneity has seen an explosion of interest in the causal inference literature, though it is typically restricted to a single treatment, either binary \citep{athey2016recursive, wager2018estimation, hahn2020bayesian, semenova2021debiased, fan2022estimation, shin2023improved} or with multiple levels \citep{chang2024individualized}. In the binary setting, heterogeneity is typically summarized by the conditional average treatment effect (CATE) function, $\E\{Y(1)-Y(0)|\bX = \bx\}$ where $Y(w)$ denotes the potential outcome under a binary exposure level $w$ and $\bX$ denotes covariates. While the CATE is a functional estimand, a number of summaries of this function have been proposed to simplify heterogeneity further. \cite{henderson2020individualized} and \cite{chen2024bayesian} examined the posterior probabilities of differential treatment effect, which measures how much the posterior distributions of each CATE, evaluated at the observed covariates, differ from the sample mean of the CATE. Another straightforward univariate summary is the variance of the CATE (VTE, \cite{levy2021fundamental}), which describes the overall degree of heterogeneity. Related quantities of interest to our study  are treatment effect variable importance measures (TE-VIM), which have been recently proposed for univariate treatments in \cite{hines2022variable} and \cite{li2023targeted} to learn which covariates are key factors driving treatment effect heterogeneity. TE-VIM relates regression-based variable importance measures \citep{zhang2020floodgate, williamson2021nonparametric} and the VTE of \cite{levy2021fundamental} within a causal inference framework. Nearly all of this work on treatment effect heterogeneity focuses on binary treatments, and there has been little to no work on the heterogeneous effects of multivariate, continuous exposures.

There exists a substantial gap in the literature on how to define estimands for continuous, multivariate exposures in the presence of treatment effect heterogeneity, and on how to estimate these quantities using observed data and flexible modeling approaches that make as few modeling assumptions as possible. In this paper, we aim to fill this gap by first proposing new estimands in this setting that are interpretable and help describe the complex effect of multiple treatments, and how this effect varies by observed covariates.  We extend treatment effect variable importance measures to this more complex setting and describe how to identify and estimate these estimands from observed data. These measures provide researchers with insight into causal effect modifiers for multivariate, continuous exposures and can provide policy-makers with detailed information about how treatments impact outcomes. Additionally, we develop novel estimation strategies using nonparametric Bayesian methodology. By decomposing outcome regression surfaces into distinct, identifiable functions, we are able to explicitly shrink treatment effects towards homogeneity, while still allowing for heterogeneity when it exists. We also make limited modeling assumptions as we use novel extensions of Bayesian additive regression trees (BART) such as the SoftBART prior distribution \citep{linero2018bayesian} and targeted smoothing BART \citep{starling2020bart, li2022adaptive} for interactions between exposures and covariates to allow for heterogeneity without imposing strong parametric assumptions. 

In addition to methodological development, our nationwide study of the United States Medicare population provides novel epidemiological insights into the health effects of simultaneous exposure to multiple air pollutants. We find that increased exposure to ambient air pollution leads to increases in mortality rates, and that this effect varies by race, age and socioeconomic status. This is the first study that we are aware of to investigate heterogeneity of air pollution mixture effects, and provides public health researchers with more detailed evidence about the negative health effects of air pollution. 

\section{Estimands of Interest}\label{sec: estimands}

Throughout, we assume that we observe $\boldsymbol{\mathcal{D}}=\{(Y_i,\bX_i,\bW_i)\}_{i=1}^{n}$ for each individual where $Y_i\in\mathcal{Y}\subset\reals$ is the outcome of interest, $\bX_i\in\mathcal{X}\subset\reals^p$ is a vector of pre-treatment covariates, and $\bW_i\in\mathcal{W}\subset\reals^q$ is a vector of continuous exposures. Also, we adopt the potential outcomes framework: $Y_i(\bw)$ denotes the potential outcome that would be observed under exposure $\bw$. In order to denote potential outcomes in this manner, and to identify causal effects from the observed data, we make the following assumptions:
\begin{assumption}[Causal identification assumptions]\hfill
\begin{enumerate}
    \item[(i)] \textit{SUTVA} (Stable Unit Treatment Value Assumption, \cite{rubin1980randomization}): Treatments of one unit do not affect the potential outcomes of other units (\textit{no interference}), and there are not different versions of treatments, such that $Y_i=Y_i(\bW_i)$.
    \item[(ii)] \textit{Positivity}: $0 < f_{W | X}(\bw \vert \bX = \bx)$ for all $\bx$ and $\bw$, where $f_{W | X}$ is the conditional density of the treatments given covariates.
    \item[(iii)] \textit{Unconfoundedness}: $Y(\bw) \independent \bW | \bX$ for all $\bw$. 
    \end{enumerate}
\end{assumption}
The positivity assumption guarantees that each unit has a non-zero probability to be exposed to each treatment level for all possible values of pre-treatment variables, at least in large samples. While in principle this assumption is empirically verifiable, there is not a well-established approach to doing so for multivariate, continuous treatments. Though not the focus of our work here, in Appendix \ref{sec:Appendix_positivity} we outline two distinct approaches to assess the plausibility of the multivariate positivity assumption and to examine how robust our results are to this critical assumption. First, we examine the marginal distribution of each exposure given the covariates and check whether each observation satisfies a univariate positivity criterion for each of the exposures. Second, we consider a trimmed causal estimand, which restricts the population being targeted to those observations with higher values of the conditional density of the treatments given the covariates when evaluated at the treatment values of interest. Unconfoundedness requires that the treatment $\bW$ is independent of the potential outcomes, and it implies that there exist no unmeasured variables that confound the treatment-outcome relationship.

Because exposures and covariates are both multivariate, there are many causal estimands one could define. In the following sections, we first define estimands targeting both average treatment effects and treatment effect heterogeneity, as well as variable importance measures summarizing this heterogeneity. While our focus in this paper remains on multivariate continuous exposures, we note that the estimands we propose can be applied to a broader class of exposures with multiple levels, regardless of their dimensionality. In such cases, estimation complexity is significantly reduced due to the lower dimensionality of univariate exposures or the fewer possible interventions for discrete exposures.

\subsection{Marginal and Heterogeneous Effects Estimands}

One potential estimand to look at in this scenario is $\E[Y(\bw) | \bX = \bx]$, however, there are infinitely many values that $\bw$ can take and therefore it is difficult to interpret this estimand. One option is to look at values of $\bw$ where all but one of the exposures is fixed, and then visualize this estimand as a function of one exposure. A simpler setting is one in which there are two exposure levels $\bw_1$ and $\bw_0$, under which we want to compare the outcomes. For example, one may be interested in examining what would happen if an intervention was applied that lowered the level of pollution by a specific amount. In this case, one can use the pollution levels without the intervention as $\bw_0$ and those with the intervention as $\bw_1$. Then, the effect of the intervention can be described as a function of covariates by
$$\tau_{\bw_1,\bw_0}(\bx) \equiv \E\{Y(\bw_1)-Y(\bw_0)|\bX = \bx\},$$
and the extent of treatment effect heterogeneity is dictated by how strongly this depends on $\bX$. If interest is in marginal treatment effects, this quantity can be averaged over the covariate distribution to obtain a marginal effect
$$\overline{\tau}_{\bw_1,\bw_0} = \E[\tau_{\bw_1,\bw_0}(\bX)].$$
This problem effectively reduces to the binary treatment setting where the conditional average treatment effect (CATE) is originally defined as there are only two exposure levels of interest. Because of this, many existing approaches to summarizing effect heterogeneity from the existing literature can be used. The variance of this function, $\text{Var}[\tau_{\bw_1,\bw_0}(\bX)]$, or the treatment effect variable importance measures of \cite{hines2022variable} can be incorporated analogously. For this reason, in the following section, we focus on the more complex setting where we have more than two exposure levels of interest, yet we want to study heterogeneity and examine the impact of each covariate in the degree of heterogeneity.

\subsection{Multivariate Treatment Effect Variable Importance Measures}

In many scenarios, we have more than two exposure levels of interest, yet we still want to define relevant quantities that provide information about treatment effect heterogeneity and which covariates are driving this heterogeneity. For this section, we assume that we have a reference level of the exposure denoted by $\bw_0$, which is chosen a priori. While the following estimands are well-defined for any choice of $\bw_0$ as long as the positivity assumption holds, we recommend selecting $\bw_0$ within a reasonable range, such as the mean of the exposures or a value between their first and third quartiles, to reduce the extent of extrapolation required. In Appendix \ref{sec:Appendix_w0}, we evaluate how robust the results of Section \ref{sec:MedicareAnalysis} are to the choice of $\bw_0$, and find that results are stable across different choices.

With $\bw_0$ fixed, we can define the following quantity:
$$\tau_{\bw_0}(\bx,\bw) \equiv \E[Y(\bw)-Y(\bw_0)|\bX = \bx],$$
which describes how the potential outcome surface varies by both $\bx$ and $\bw$. This function is difficult to interpret as it is a function of two multivariate arguments. We propose multivariate treatment effect variable importance measures (MTE-VIM) to summarize how much the heterogeneity of $\tau_{\bw_0}(\bx, \bw)$ is driven by each particular covariate.

Before defining the variable importance metric, we must first define the overall amount of heterogeneity of the treatment effect, which we define as
$$\phi = \E_{\bW}[\var_{\bX}\{\tau_{\bw_0}(\bX, \bW)\}].$$
This looks at the variability of the treatment effects as a function of $\bX$ for a fixed exposure level, but this variability may differ for different values of exposures $\bW$, and therefore we average this variability across the range of exposures. In practice, the expectation and variance are taken with respect to the empirical distributions of exposures and covariates, respectively. Note that instead of the marginal variance of $\bX$, we could have alternatively used the conditional variance of $\bX$ given $\bW$. We believe, however, that the marginal variance is a more appropriate choice for measuring the amount of treatment effect heterogeneity, as it reflects variation of treatment effects that exist regardless of which units receive each level of exposure. Further, we will see in Section \ref{sec:MTEestimation} that the conditional variance would complicate our estimation of this quantity as it would require a model for the distribution of $\bX$ given $\bW$, which is difficult to specify given the dimension of both the covariates and exposures. Next, we define
$$\phi_j = \E_{\bW}\left(\var_{\bX_{-j}}\left[\E_{X_j|\bX_{-j}}\{\tau_{\bw_0}(\bX, \bW)\}\right]\right)$$
where $\bX_{-j}$ denotes the $p-1$ remaining covariates without the $j$-th covariate $X_j$. This is similar to $\phi$, but we first take the conditional expectation of $\tau_{\bw_0}(\bX, \bW)$ with respect to the $j$-th covariate given the remaining covariates so that the term inside the variance is a function of only $\bX_{-j}$ and $\bW$. We know by the law of total variance that $\phi-\phi_j \in [0,\phi]$, and this quantity measures the amount heterogeneity of $\tau_{\bw_0}(\bX, \bW)$ that cannot be explained without $X_j$.

Assuming some degree of heterogeneity, i.e. $\phi>0$, we propose the following estimand, which we call MTE-VIM, to describe the importance of each covariate to the overall heterogeneity of the causal effect:
$$\psi_j = 1 - \phi_j/\phi \in [0,1].$$ 
This can be interpreted as the proportion of the treatment effect heterogeneity of $\tau_{\bw_0}(\bX, \bW)$ not explained by $\bX_{-j}$. Under the proposed model in Section \ref{sec: estimation}, which implies that the treatment effect heterogeneity is additive in covariates, the sum of all $\psi_j$ equals one when covariates are mutually independent. We note that one should carefully interpret $\psi_j$ in the presence of highly correlated covariates. If two covariates are highly correlated, then they will both have small values of $\psi_j$ regardless of whether they modify the treatment effect. This issue is not unique to our estimand, as it is present in other commonly used variable importance measures that focus on prediction instead of treatment effect heterogeneity \citep{verdinelli2024feature}. We don't view this as a problem, as these estimands are simply descriptive measures aiming to assign relative importance to each covariate in the heterogeneity of the treatment effect. If one is concerned about this issue, then they can used a grouped version of the MTE-VIM, given by $\psi_s$, where $s$ is a subset of $\{1,2,...,p\}$. A detailed description of this can be found in \cite{hines2022variable}, though the general idea is to find the proportion of the treatment effect heterogeneity that can't be explained by the remaining covariates not in $s$. One can use a priori knowledge to select groups of covariates, or use the sample correlation matrix of the covariates to identify groups of correlated variables.

Lastly, while our paper focuses mostly on treatment effect heterogeneity and the role that covariates play in modifying the treatment effect, similar ideas could be used to identify which treatments have the largest impact on the outcome. Specifically, one could define an estimand based on $\E_{\bX}[\var_{\bW}\{\tau_{\bw_0}(\bX, \bW)\}]$ and $\E_{\bX}(\var_{\bW_{-k}}[\E_{W_k\vert \bW_{-k}}\{\tau_{\bw_0}(\bX, \bW)\}])$ to identify the proportion of variability in the treatment effect that can not be explained without treatment $k$, which would provide insights about which treatments are driving the treatment effect.

\section{Estimation Issues}\label{sec: estimation}

Under the assumptions introduced in Section \ref{sec: estimands}, for any fixed $\bw$ we can identify $\E\{Y(\bw)|\bX = \bx\}$ from the observed data by $\E(Y|\bX = \bx,\bW=\bw)$. Other identification strategies such as those involving propensity scores \citep{rosenbaum1983central} or combinations of outcome models and propensity scores \citep{bang2005doubly} are common in causal inference. However, these do not apply here due to the difficulty of estimating the density of the multivariate treatment given the covariates, and because these identification strategies are not well studied for multivariate, continuous treatments. Due to this identification result, all estimands, including the MTE-VIM are identifiable given the conditional outcome regression and therefore we focus our estimation on this quantity. 

We separate the conditional outcome regression surface into three parts: the main effect of covariates $\bX$, the main effect of exposures $\bW$, and the interactions of covariates and exposures on the outcome. Therefore, we write our model as follows:
\begin{align}\label{eq: model}
    \E (Y|\bX,\bW) = c + f(\bX) + g(\bW) + h(\bX,\bW).
\end{align}
Note that this model is overparameterized as it currently stands because the $f(\cdot)$ and $g(\cdot)$ functions can be absorbed into the $h(\cdot, \cdot)$ function. For this reason, these functions are not individually identifiable, and only their sum is identified. We write the model in this way so that we can explicitly shrink each of these components separately, and we discuss in this section how to structure the model so that each of these functions is individually identifiable and therefore amenable to shrinkage and regularization. 

Let $\mathcal{X}_j$ and $\mathcal{X}_{-j}$ denote the support of the $j$-th covariate and the remaining covariates, respectively. To simplify the structure of interactions between covariates and exposures, we make the following assumption:
\begin{assumption} For any $\bw_1,\bw_0\in\mathcal{W}$, $x_j^*, x_j \in \mathcal{X}_j$, and $\bx_{-j}^*, \bx_{-j} \in \mathcal{X}_{-j}$, 
\begin{align*}
    &\E[Y(\bw_1)-Y(\bw_0)|X_j=x_j^*, \bX_{-j}=\bx_{-j}] - \E[Y(\bw_1)-Y(\bw_0)|X_j=x_j, \bX_{-j}=\bx_{-j}] \\
    &= \E[Y(\bw_1)-Y(\bw_0)|X_j=x_j^*, \bX_{-j}=\bx_{-j}^*] - \E[Y(\bw_1)-Y(\bw_0)|X_j=x_j, \bX_{-j}=\bx_{-j}^*]
\end{align*}
for $j=1,...,p$.
\end{assumption}
This assumption ensures that the treatment effect heterogeneity attributable to the $j$-th covariate does not depend on the other covariates. In our model, this assumption implies that $h(\bX,\bW)=\sumjo[p] h_j(X_j,\bW)$, which restricts interactions between covariates and exposures to be limited to interactions between a single covariate and the multivariate exposures. Given that most environmental mixture studies assume no treatment effect heterogeneity, which is a much stronger assumption, and that previous studies for a single exposure have suggested that the treatment effect of \pmtpfs on mortality rate is primarily modified by one covariate at a time \citep{bargagli2020causal, lee2021discovering}, we feel this is a mild and reasonable assumption in the environmental health context. In simulation studies in Appendix \ref{sec:AppendixNonAdditive}, even when this assumption is violated, we find that our proposed model provides a reasonable estimate compared to existing methods. Nonetheless, if strong, higher order covariate interactions with multivariate exposure effects are expected, we recommend using a more complex model, such as a single BART model with both $\bX$ and $\bW$ as input variables that places no additivity restrictions on the model. 

The first hurdle to identification of the individual functions is that $h_j(X_j,\bW)$ may capture the main effect of either the covariate or exposures. For example, we could have $h_j(X_j,\bW) = X_j + W_1 + X_jW_1$, which captures both main effects and interaction terms, when ideally it would only capture $X_jW_1$. Effectively we want this function to only be nonzero if there is truly an interaction between the exposures and $X_j$. For this reason, we restrict $h_j(X_j,\bW)$ to be of the form $h^{cov}_j(X_j)h^{exp}_j(\bW)$ where $h^{cov}_j$ is a non-constant function of covariate $j$, and $h^{exp}_j$ is a non-constant function of the exposures. This allows us to prevent the interaction function $h_j(X_j,\bW)$ from simply absorbing the main effects of $X_j$ and $\bW$. We describe our strategy for each of these separable functions in the following sections.

\subsection{Identification through Shifting}\label{sec: identification}

The formulation for $h(\bX,\bW)$ in the previous section helped ensure that the interaction functions do not solely contain main effects of the exposures or covariates, however, the individual functions are still only identifiable up to constant shifts. For example, without further restriction, shifting $f(\bX)$ upward by $\delta$, and $g(\bW)$ and $h(\bX,\bW)$ downward by $\delta/2$ leads to the same likelihood. This is problematic as we would like our interaction functions to be nonzero only when there is heterogeneity of the treatment effect. If these functions are not identifiable, it becomes more difficult to shrink them towards zero when there is no heterogeneity in the model.

One common way to address this is to put moment restrictions on the functions such that $\E_{\bX}\{f(\bX)\}=\E_{\bW}\{g(\bW)\}=\E_{\bX,\bW}\{h(\bX,\bW)\}=0$ with the additional restriction that $\E_{\bX}\{h(\bX,\bW)\}=\E_{\bW}\{h(\bX,\bW)\}=0$. This approach is difficult to implement, however, as enforcing these conditions requires the conditional density of covariates given exposures, and vice versa, which is difficult to estimate given the dimension of the exposures and covariates. We use an alternative restriction that also leads to identifiability of individual functions, but is straightforward to implement. Specifically, we enforce that $f\{\E(\bX)\}=g\{\E(\bW)\}=0$ and $h\{\E(\bX),\bw\}=h\{\bx,\E(\bW)\}=0$ for all $\bx$ and $\bw$. Under this restriction, the sum of the constant term and $f(\bX)$ represents the conditional expectation of the potential outcome as a function of covariates when exposures are set to their mean. The term $g(\bW)$ represents the shifted average exposure-response curve at the mean of covariates, with the shift ensuring that it equals zero at the mean of exposure. Finally, $h(\bX,\bW)$ captures the remaining treatment effect heterogeneity. We show in Appendix \ref{sec: identifiability proof} that this leads to identifiability of the individual functions. In practice, this restriction is achieved by centering our estimated functions at each iteration of an MCMC algorithm. If we let all functions with a 0-subscript denote unrestricted functions that do not enforce the aforementioned constraints, then these can be written as:
{\small\begin{align*}
    \E (Y|\bX,\bW) &= c_0 + f_0(\bX) + g_0(\bW) + \sumjo[p] h_{j0}(X_j,\bW)\\
    &= \bigg[c_0 + f_0(\E(\bX)) + g_0(\E(\bW)) + \sumjo[p] h_{j0}(\E(X_j),\E(\bW))\bigg]\\
    &\v\v+ \bigg[f_0(\bX) - f_0(\E(\bX)) + \sumjo[p]\left\{ h_{j0}(X_j,\E(\bW)) - h_{j0}(\E(X_j),\E(\bW))\right\}\bigg]\\
    &\v\v+ \bigg[g_0(\bW) - g_0(\E(\bW)) + \sumjo[p]\left\{ h_{j0}(\E(X_j),\bW) - h_{j0}(\E(X_j),\E(\bW))\right\}\bigg]\\
    &\v\v+ \bigg[\sumjo[p] \left\{h_{j0}(X_j,\bW) - h_{j0}(X_j,\E(\bW)) - h_{j0}(\E(X_j),\bW) + h_{j0}(\E(X_j),\E(\bW)) \right\}\bigg]\\
    &=c + f(\bX) + g(\bW) + \sumjo[p] h_j(X_j,\bW).
\end{align*}}
At each iteration of the MCMC, the functions are shifted to satisfy this condition, and therefore our individual functions are identifiable, and the interaction functions should only be nonzero when there is heterogeneity of the treatment effect. Because of this, we can apply shrinkage priors to the $h_j(X_j, \bW)$ functions, which should improve estimation when heterogeneity is not present. Note that the alternative restriction that we impose is baseline-free in the sense that one can replace the chosen baseline, $\E(\bX)$ and $\E(\bW)$, with any values of the covariates and exposures. 

\subsection{Soft Bart and Targeted Smoothing}

In this section we detail the specific models we use for each of the $f(\bX)$, $g(\bW)$, and $h(\bX,\bW)$ functions. For the main effect functions we use the SoftBART prior of \cite{linero2018bayesian}, while we use BART with \textit{targeted smoothing} (tsBART, \cite{starling2020bart, li2022adaptive}) for estimation of the interaction functions. Before detailing each of these BART extensions, we first briefly introduce the original BART model. BART was introduced by \cite{chipman2010bart} and is a fully Bayesian ensemble-of-trees model that has seen increasing usage in causal inference due to the success of the original BART and its variants \citep{dorie2019automated}. Specifically, it assumes that
\begin{align*}
    Z &= f(\boldsymbol{v}) + \epsilon, \qquad \epsilon \overset{\iid}{\sim} N(0,\sigma^2)\\
    f(\boldsymbol{v})&= \sum_{m=1}^M \text{Tree}(\boldsymbol{v}, \mathcal{T}_m, \mathcal{M}_m)
\end{align*}
where $\text{Tree}(\boldsymbol{v}, \mathcal{T}_j, \mathcal{M}_j)$ represents a regression tree with tree structure $\mathcal{T}_j$ inducing a step function in $\boldsymbol{v}$, and leaf parameters $\mathcal{M}_m=\{\mu_{m1},\cdots, \mu_{mb_m}\}$ for prediction. The standard prior distribution sets $\mu_{mk} \sim N(0,\sigma_\mu^2/M)$ so that the overall prior variance is $\var(f(\bv))=\sigma_\mu^2$. Although we do not detail the hyper-parameters of BART in this paper, we note that the default setting of \cite{chipman2010bart} encourages each tree to be shallow and shrinks leaf parameters toward zero so that $\text{Tree}(\cdot)$ can be considered as a weak learner. We refer interested readers to \cite{linero2017review} and \cite{hill2020bayesian} for recent reviews of BART.

Although the original BART approach is successful, due to the piece-wise constant nature of tree ensembles, it suffers when the underlying truth is smooth even if it is relatively simple. To overcome this shortcoming, \cite{linero2018bayesian} propose a smooth modification of BART (SoftBART or SBART) by allowing $\boldsymbol{v}$ to follow a probabilistic path, that is, randomizing the decision rule at each split. Because we expect the effects of environmental exposures or other continuous treatments on the outcome to be smooth, we use SoftBART for estimation of both $f(\bX)$ and $g(\bW)$. 

Another smooth variant of BART is tsBART \citep{starling2020bart, li2022adaptive} which is originally motivated by time-to-event data and density estimation, though we use it for estimation of the interaction functions $h_j(X_j, \bW)$. Unlike SoftBART which smooths the regression function over all variables, tsBart smooths over a single targeted variable, say $u$. After centering at $\gamma(u)$, a baseline function of $u$, the model can be written as $f(u,\boldsymbol{v})= \gamma(u) + \sumjo[m] \text{tsTree}(u,\boldsymbol{v}, \mathcal{T}_m, \mathcal{M}_m)$ where each terminal node is associated with $\mathcal{M}_m=\{\mu_{m1}(u),\cdots, \mu_{mb_m}(u)\}$. Here, each element $\mu_{mb}(u)$ follows a Gaussian process in $u$ with mean zero and covariance function $\Sigma(u,u')$, so that  $f(u,\boldsymbol{v})$, the sum of $m$ Gaussian processes, is continuous in $u$. Following \cite{li2022adaptive}, we approximate the model by $f(u,\boldsymbol{v})= \gamma(u) + \sum_{m=1}^M \mathcal{B}_m(u)\text{Tree}(\boldsymbol{v}, \mathcal{T}_m, \mathcal{M}_m)$ where $\mathcal{B}_m(u) = \sqrt{2} \cos(\omega_m u + b_m)$ is a random basis function where $\omega_m \iidsim N(0, \rho^{-2})$ and $b_m \iidsim U(0,2\pi)$ so that $\sum_{m=1}^M \mathcal{B}_m(u)\text{tree}(\boldsymbol{v}, \mathcal{T}_m, \mathcal{M}_m)$ weakly converges to a Gaussian process $\text{GP}(0,\Sigma(\cdot,\cdot))$ as $M\rightarrow\infty$ where $\Sigma(u,u') = \sigma_\mu^2 \exp\{-(u-u')/(2\rho^2)\}$.

By fitting tsBART for each interaction function $h_j(X_j,\bW)$ in our model by setting $u=X_j$ and $\bv = \bW$, we have that each fitted function is necessarily the sum of products of a continuous function of $X_j$ and a flexible tree of $\bW$ so that it does not include a function solely of $X_j$ or $\bW$. The default prior specifications from \cite{linero2018bayesian} and \cite{li2022adaptive} are used throughout our implementation. However, to enhance stability, we regularize the interaction function more aggressively by capping the variance of each terminal node, $\sigma_\mu$, at $3.5/2\sqrt{M}$, which is its default initial specification.

\subsection{Estimation of MTE-VIM}

\label{sec:MTEestimation}
We now detail the estimation strategy for the proposed variable importance measures. Recall that MTE-VIM for the $j$-th covariate, $\psi_j$, is composed of the total heterogeneity $\phi = \E_{\bW}[\var_{\bX}\{\tau_{\bw_0}(\bX, \bW)\}]$ and the total heterogeneity accounted for by $\bX_{-j}$, given by $\phi_j= \E_{\bW}\left(\var_{\bX_{-j}}\left[\E_{X_j|\bX_{-j}}\{\tau_{\bw_0}(\bX, \bW)\}\right]\right)$. As before, under the assumptions of Section \ref{sec: estimands}, we can write $\tau_{\bw_0}(\bx, \bw) = \E(Y|\bX=\bx,\bW=\bw) - \E(Y|\bX=\bx,\bW=\bw_0)$, and therefore the posterior distribution of $\tau_{\bw_0}(\bx, \bw)$ can be obtained once we have the posterior distribution of the outcome regression model. Obtaining $\phi$ is relatively straightforward, as we use empirical distributions of both the exposures and covariates to approximate moments with respect to $\bW$ or $\bX$. We can construct an $n\times n$ matrix for which the $(k,l)$-th element is $\tau_{\bw_0}(\bX_{l}, \bW_{k})$, where $\bX_{l}$ and $\bW_{k}$ represent the observed covariates of the $l$-th observation and the exposures of the $k$-th observation, respectively. Then, the sample variance of the $k$-th row of the matrix would estimate $\var_{\bX}\{\tau_{\bw_0}(\bX, \bW_{k})\}$. Taking the sample mean of these $n$ column sample variances estimates $\E_{\bW}[\var_{\bX}\{\tau_{\bw_0}(\bX, \bW)\}]$. This estimation process can be illustrated as follows:
{
\begin{align*}
\begin{pmatrix}
\tau_{\bw_0}(\bX_{1}, \bW_{1}) & \tau_{\bw_0}(\bX_{2}, \bW_{1}) & \dots & \tau_{\bw_0}(\bX_{n}, \bW_{1}) \\
\tau_{\bw_0}(\bX_{1}, \bW_{2}) & \ddots &  & \tau_{\bw_0}(\bX_{n}, \bW_{2}) \\
\vdots &  & \ddots &  \vdots \\
\tau_{\bw_0}(\bX_{1}, \bW_{n}) & \tau_{\bw_0}(\bX_{2}, \bW_{n}) & \dots & \tau_{\bw_0}(\bX_{n}, \bW_{n})
\end{pmatrix}
\longrightarrow &
\begin{pmatrix}
\text{Var}_{\bX} [\tau_{\bw_0}(\bX, \bW_{1})] \\[5pt]
\text{Var}_{\bX} [\tau_{\bw_0}(\bX, \bW_{2})] \\ \vdots \\
\text{Var}_{\bX} [\tau_{\bw_0}(\bX, \bW_{n})] \\
\end{pmatrix}\\
& \hspace{0.5 in} \big\downarrow \\
& \E_{\bW}[\var_{\bX}\{\tau_{\bw_0}(\bX, \bW)\}]
\end{align*}}
The same strategy can be used for $\phi_j$ replacing $\tau_{\bw_0}(\bX_{l}, \bW_{k})$ with $\E_{X_j|\bX_{-j}}\{\tau_{\bw_0}(\bX_{l}, \bW_{k})\}$ everywhere, so all that is left is to describe how to obtain $\E_{X_j|\bX_{-j}}\{\tau_{\bw_0}(\bX_{l}, \bW_{k})\}$. For ease of exposition, we re-write $\tau_{\bw_0}(\bX_{l}, \bW_{k})$ as $\tau_{\bw_0}(X_{lj}, \bX_{l(-j)}, \bW_{k})$ where $X_{lj}$ denotes the $j$-th covariate of the $l$-th individual and $\bX_{l(-j)}$ denotes the remaining covariates of the $l$-th individual. A nonparametric estimate of this mean can be defined as
$$\widehat{\E}_{X_j|\bX_{-j}}\{\tau_{\bw_0}(\bX, \bW)\} = \frac{\sum_{i=1}^n \tau_{\bw_0}(X_{ij}, \bX_{-j}, \bW) K(\bX_{i(-j)} - \bX_{-j})}{\sum_{i=1}^n K(\bX_{i(-j)} - \bX_{-j})}$$
where $K(\cdot)$ is an appropriately defined kernel. We do not provide specific details regarding the choice of the kernel or the bandwidth parameter, as standard approaches to selection of these parameters in nonparametric regression could be applied. The sample correlation matrix of the covariates could help inform this decision, as a larger bandwidth would be appropriate if $X_j$ is assumed to be (approximately) independent of the other covariates. In this case, kernel smoothing simplifies to the sample average of $\tau_{\bw_0}(X_{ij}, \bX_{-j}, \bW)$, which we do in the simulation studies in Section \ref{sec: simulation} where covariates are mutually independent. For a comprehensive overview of kernel smoothing techniques, see \cite{hastie2009kernel, ghosh2018kernel}. While the kernel approach is applicable if the dimension of $\bX$ is small, it won't work well as the number of covariates grows. In this more difficult setting, we could make parametric assumptions and estimate the mean of $X_j$ given $\bX_{-j}$ using a regression model. For instance, if $X_j$ is continuous, we could assume that it follows a normal distribution and we can estimate the conditional mean and variance using linear regression. This would provide us an estimate of the conditional density $f_{X_j | \bX_{-j}}$ from which we can calculate $\E_{X_j|\bX_{-j}}\{\tau_{\bw_0}(\bX, \bW)\}$. We use this regression-based approach in the data analysis in Section \ref{sec:MedicareAnalysis}. Since the kernel smoothing and parametric regression approaches discussed above pertain specifically to the calculation of the MTE-VIM, which is a purely descriptive measure of the treatment effect function, they are separated from the MCMC sampling. Also, we believe that the MTE-VIM, which aims to provide some degree of interpretability of the multidimensional treatment effect function, remains useful even under mild misspecification of the covariate distribution. In principle, one could employ flexible Bayesian nonparametric models for estimating the covariate distribution and incorporate this procedure into the MCMC sampling algorithm to improve the accuracy of MTE-VIM estimation. We do not pursue that approach here as it would significantly increase computation time, especially for large datasets such as those used in our analysis.

Note that both of these require the construction of an $n\times n$ matrix at each MCMC sample, which is computationally intensive. This is particularly problematic for the analysis of Medicare data in Section \ref{sec:MedicareAnalysis}, where our sample consists of all 38702 zip-codes in the United States. As an approximation to this calculation, one can use a blocking scheme where we construct several sub-matrices of the data when calculating the variable importance metrics. Specifically, we split the overall sample into $K$ groups of approximately equal size. Letting the sample size in group $k$ be $n_k$, we can estimate the variable importance metric using the subset of the data in group $k$, which only requires the calculation of an $n_k \times n_k$ matrix. We can do this for each group separately, and then average results across groups for a final estimate of $\psi_j$ for $j =1, \dots, p$. We see in Appendix \ref{sec:AppendixBlocking} that this blocking scheme performs equally well as using the full sample, but is significantly faster on large data sets. Therefore, we use our method with the blocking scheme throughout the paper. We also note that the blocking scheme is entirely independent of the fitting of the conditional outcome models, and therefore, it should not affect the near-minimax contraction rates for estimating conditional outcomes, which we provide in Section \ref{sec: contraction rates}.

\section{Posterior contraction rates}\label{sec: contraction rates}
Next, we study the rate at which the posterior distributions concentrate in the model described in Section \ref{sec: estimation}. Let $\mathcal{S}(\alpha, p, d)$ denote a set of $\alpha-$H\"older continuous functions on $[0,1]^p$ that are constant in all but $d$ of the coordinates. Following \cite{linero2018bayesian} and \cite{li2022adaptive}, we make the following assumptions about the data-generating process (Condition A).
\begin{enumerate}[label= (A\arabic*)]
    \item $Y_i\sim\nnormal(c^*+f^*(\bX_i)+g^*(\bW_i)+h^*(\bX_i,\bW_i), 1)$
    \item The range of $\bX_i$ and $\bW_i$ are $[0,1]^p$ and $[0,1]^q$, respectively.
    \item $f^*\in \mathcal{S}(\alpha_x, p, d_x)$, $g^*\in \mathcal{S}(\alpha_w, p, d_w)$, and $h^*\in \mathcal{S}(\alpha_h, p+q, d_h)$
\end{enumerate}
We assume the error variance is $\sigma^2 = 1$ for simplicity, but it is straightforward to incorporate unknown $\sigma^2$ as well. Achieving (A2) is straightforward by applying quantile normalization, where $X_{ij}$, the $i$-th individual's $j$-th covariate, is replaced with its quantile value among all the $j$-th covariate values, and the same for the exposures. Further, we must make assumptions about the SoftBART prior distribution $\Pi$ used for each regression function $f$, $g$, and $h$. For brevity, we leave these specific details to Appendix \ref{sec: proof}, and we refer to these assumptions on the prior distribution as Condition P. We have simplified the setup by not incorporating targeted smoothing into $h(\bx, \bw)$, though we emphasize that \cite{li2022adaptive} show that for certain choices of basis function, it is not difficult to prove analogous results for targeted smoothing models. We remove targeted smoothing from consideration so that we can use the same Condition P for all model components rather than requiring a separate set of conditions for $h(\cdot,\cdot)$.

Let $\E_0$ denote the expectation with respect to the true data-generating mechanism, $\Pi_n$ denote the posterior distribution, $\Vert\mu\Vert_n^2=\dfrac{1}{n}\sumio \mu(\bX_i,\bW_i)^2$ where $\mu(\bX_i,\bW_i)=c+f(\bX_i)+g(\bW_i)+h(\bX_i,\bW_i)$, and $\mu^*$ denote the ground truth of $\mu$. In Appendix \ref{sec: proof}, we prove the following theorem.
\begin{theorem}\label{Thm: Convergence} If Conditions A and P hold, 
    $$\E_0\{\Pi_n(\Vert \mu-\mu^*\Vert_n > M\epsilon_n)\}\rightarrow 0, \mathtext{as} n\rightarrow\infty$$
    for $\epsilon_n=\max(\epsilon_{nf}, \epsilon_{ng}, \epsilon_{nh})$ where
        $\epsilon_{nf} = n^{-\alpha_x / (2\alpha_x + d_x)} \log(n)^{t_f}, 
  \epsilon_{ng} = n^{-\alpha_w / (2\alpha_w + d_w)} \log(n)^{t_g}$ and
  $\epsilon_{nh} = n^{-\alpha_h / (2\alpha_h + d_h)} \log(n)^{t_h}$ where
        $t_f = \alpha_x (d_x + 1) / (2\alpha_x + d_x), 
        t_g = \alpha_w (d_w + 1) / (2\alpha_w + d_w),$ and $t_h = \alpha_h (d_h + 1) / (2\alpha_h + d_h).$
\end{theorem}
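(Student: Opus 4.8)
The plan is to verify the three standard inputs of the Bayesian posterior-contraction machinery --- prior mass around the truth, sieves of controlled metric entropy, and negligible prior mass outside the sieves --- and then invoke the general contraction theorem in the fixed-design Gaussian-regression form used in \cite{linero2018bayesian}. The additive structure $\mu = c + f(\bX) + g(\bW) + h(\bX,\bW)$ is what makes this tractable: each input decomposes into a per-component statement, and every such statement is already available for the SoftBART prior under Condition P. Since the theorem concerns only the identified object $\mu$, I would fix the target decomposition to be the true one $(c^*, f^*, g^*, h^*)$ of (A1), so the non-identifiability of the individual summands plays no role. I would also use repeatedly that $\|\cdot\|_n \le \|\cdot\|_\infty$ and that, with $\sigma^2 = 1$ known, a Kullback--Leibler neighborhood of the joint law $N(\mu^*(\cdot),1)^{\otimes n}$ contains (up to constants) a $\|\cdot\|_n$-ball around $\mu^*$, so it suffices to control everything in $\|\cdot\|_n$.

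For prior concentration, the triangle inequality and prior independence of $c,f,g,h$ give
\[
\Pi\big(\|\mu - \mu^*\|_\infty < \epsilon_n\big) \ \ge\ \Pi\big(|c-c^*| < \tfrac{\epsilon_n}{4}\big)\,\Pi\big(\|f-f^*\|_\infty < \tfrac{\epsilon_n}{4}\big)\,\Pi\big(\|g-g^*\|_\infty < \tfrac{\epsilon_n}{4}\big)\,\Pi\big(\|h-h^*\|_\infty < \tfrac{\epsilon_n}{4}\big).
\]
The SoftBART sup-norm prior-concentration bound of \cite{linero2018bayesian}, applied with $(\alpha_x,d_x)$ to $f^*\in\mathcal S(\alpha_x,p,d_x)$ and analogously to $g^*,h^*$, yields $\Pi(\|f-f^*\|_\infty < \epsilon_{nf}) \ge e^{-C n\epsilon_{nf}^2}$ and likewise at rates $\epsilon_{ng},\epsilon_{nh}$; since $\epsilon_n = \max(\epsilon_{nf},\epsilon_{ng},\epsilon_{nh})$, each factor above is at least $e^{-C' n\epsilon_n^2}$, and the intercept factor is bounded below by a positive constant, so the product is $\ge e^{-C'' n\epsilon_n^2}$ --- the required Kullback--Leibler prior-mass condition at rate $\epsilon_n$.

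For the sieves, let $\mathcal F_n^f,\mathcal F_n^g,\mathcal F_n^h$ be the SoftBART sieves of \cite{linero2018bayesian} (trees with a bounded number of terminal nodes, bounded depth, and a sufficiently fine discretization of the leaf outputs) and put $\mathcal M_n = \{\,c+f+g+h : |c|\le C_n,\ f\in\mathcal F_n^f,\ g\in\mathcal F_n^g,\ h\in\mathcal F_n^h\,\}$ with $C_n\asymp\sqrt n\,\epsilon_n$. Since a Minkowski sum is covered by sums of covers of the summands,
\[
\log N(\epsilon_n,\mathcal M_n,\|\cdot\|_n) \ \lesssim\ \log N(\tfrac{\epsilon_n}{4},\mathcal F_n^f,\|\cdot\|_\infty) + \log N(\tfrac{\epsilon_n}{4},\mathcal F_n^g,\|\cdot\|_\infty) + \log N(\tfrac{\epsilon_n}{4},\mathcal F_n^h,\|\cdot\|_\infty) + \log(C_n/\epsilon_n) \ \lesssim\ n\epsilon_n^2,
\]
using the per-component sieve entropy bounds of \cite{linero2018bayesian}, and a union bound gives $\Pi(\mathcal M_n^c) \le \Pi(|c|>C_n) + \Pi(f\notin\mathcal F_n^f) + \Pi(g\notin\mathcal F_n^g) + \Pi(h\notin\mathcal F_n^h) \le e^{-C n\epsilon_n^2}$ from the light tails of the number-of-trees / terminal-node priors in Condition P (and a Gaussian tail for $c$). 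The entropy bound produces, by the usual Le Cam--Birg\'e test construction, exponentially powerful tests of $\mu^*$ against $\{\|\mu-\mu^*\|_n > M\epsilon_n\}\cap\mathcal M_n$; feeding these and the two prior-mass bounds into the general contraction theorem delivers $\E_0\{\Pi_n(\|\mu-\mu^*\|_n > M\epsilon_n)\}\to 0$.

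I expect the only real obstacle to be the per-component SoftBART input --- sup-norm prior concentration at the rate $\epsilon_{n\bullet}$, plus the matching entropy and remaining-mass bounds, for an $\alpha$-H\"older function that is constant in all but $d$ coordinates. This is exactly the content of \cite{linero2018bayesian} under Condition P (and of \cite{li2022adaptive} under the analogous conditions once the targeted-smoothing basis is attached to $h$), so the argument reduces to citing those results and checking that the additive decomposition transmits them to $\mu$, which it does because $-\log$ prior masses and log entropies add over the three blocks while the slowest rate dominates the maximum. The one technical nuisance to watch is that all statements must live in the empirical norm $\|\cdot\|_n$: this is harmless since sup-norm concentration implies $\|\cdot\|_n$ concentration, covering numbers multiply over Minkowski sums in any norm, and in the known-variance fixed-design Gaussian model the Hellinger distance between the two product likelihoods is a monotone function of $\|\mu-\mu^*\|_n$, so the testing step is automatically performed in the desired metric.
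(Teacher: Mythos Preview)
Your proposal is correct and follows essentially the same route as the paper: verify the three Ghosal--van der Vaart conditions (prior mass, sieve entropy, sieve complement) by decomposing each additively over the components $f,g,h$, invoke the per-component SoftBART bounds of \cite{linero2018bayesian} under Condition~P, and take $\epsilon_n$ to be the slowest of the three componentwise rates. The only cosmetic difference is that you treat the intercept $c$ as a separate fourth block (dividing by $4$ and adding an interval sieve $|c|\le C_n$), whereas the paper silently absorbs it and works with three blocks; both are fine.
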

\noindent The rates $\epsilon_{nf}, \epsilon_{ng}, \epsilon_{nh}$ represent the minimax estimation rates within the respective function classes for $(f, g, h)$, up to a logarithmic term, and therefore the rate $\epsilon_n$ is a near-minimax optimal rate for estimation of $\mu^*$. This result shows that our model is able to capture any true outcome regression function as long as it is sufficiently smooth, which is reasonable for the study of environmental pollutants. Further note that while we have derived posterior contraction rates for $\mu$, these imply the same posterior contraction rates for causal effects, such as $\tau_{\bw_0}(\bx,\bw)$, and for variable importance metrics $\psi_j$.

\section{Simulation Studies}\label{sec: simulation}

Here, we assess the performance of our proposed approach and compare it to existing approaches that do not allow for heterogeneity of the multivariate treatment effect. For the simulation studies we draw covariates and exposures from the following multivariate normal distributions:
\begin{align*}
    &\bX_i \overset{\iid}{\sim} N(\bo,\bi_5),
    \quad \bW_i \sim N(\Vec{\mu}_i,\Vec{\Sigma})\qquad \text{where}\\
    &\Vec{\mu}_i = 
    \left(e^{X_{i1}}/(1+e^{X_{i1}}) - 0.5, 0.1X_{i2}^2 - 0.1, 0.3 X_{i3}, \sin (X_{i2}), 0.05 X_{i4}^3\right)^T,\\
    &\Sigma_{ij}= \begin{cases}
    1\qquad &i=j\\
    0.3\qquad &i\neq j
    \end{cases}
\end{align*}
and $\Sigma_{ij}$ denotes the $(i,j)$ element of $\Vec{\Sigma}$. The covariates are associated with the exposures and the exposures are correlated with each other, which are both common in observational studies in environmental health. Next, we generate the outcome by
\begin{align*}
    &Y_i = f^*(\bX_i) + g^*(\bW_i) + \sumjo[5] h^*_j(X_{ij},\bW_i) + \epsilon_i, \qquad \epsilon_i \iidsim N(0,1)\\
    &f^*(\bX_i) = X_{i1} + X_{i2} - 0.5X_{i3}\\
    &g^*(\bW_i) = \indct{W_{i1}>0} + W_{i1}e^{0.3W_{i3}} + \arctan(W_{i2}) + \sin(W_{i2}W_{i3}\pi)+\min(|W_{i3}|,1).
\end{align*}
Further, we consider three scenarios of interactions between covariates and exposures:
{\small\begin{align*}
    \textbf{No interaction: } &h^*_1(X_{i1},\bW)=h^*_2(X_{i2},\bW)=0\\
    \textbf{Moderate: } &h^*_1(X_{i1},\bW) = 0.2\arctan (4X_{i1})g^*(\bW),\quad h^*_2(X_{i2},\bW) = 0.2\cos (X_{i2}\pi)g^*(\bW)\\
    \textbf{Strong: } &h^*_1(X_{i1},\bW) = 0.4\arctan (4X_{i1})g^*(\bW),\quad h^*_2(X_{i2},\bW) = 0.4\cos (X_{i2}\pi)g^*(\bW)
\end{align*}}
while $h^*_j(X_{ij},\bW)=0$ for $j=3,4,5$ under all scenarios. The standard deviations of the heterogeneous treatment effect functions are approximately 0.2 and 0.4 times that of the sum of the two main effects, respectively, which is reasonable as we expect that interaction effects to be smaller in magnitude than main effects in general. We run the simulation for 300 different data replicates for each scenario and average results across all simulated data sets. We set the sample size to be $n=2000$. We aim to estimate $\tau_{\bw_1,\bw_0}(\bX) \equiv \E\{Y(\bw_1)-Y(\bw_0)|\bX\}$ at 100 randomly chosen locations from the distribution of $\bX$, as well as the variable importance metrics $\psi_j$ for all covariates. We choose two fixed exposures levels $\bw_0=(-0.5, -0.5,-0.5,-0.5,-0.5)^T$ and $\bw_1=(0.5, 0.5,0.5,0.5,0.5)^T$ which roughly correspond to the first and the third quartiles of each exposure, respectively. We consider four different approaches in the simulation. The first is the Bayesian kernel machine regression approach (BKMR) commonly used in the environmental mixtures literature, which fits a model of the form $E(Y \vert \bX, \bW) = \bX \boldsymbol{\beta} + m(\bW)$. A Gaussian process prior is placed on $m(\cdot)$, which incorporates spike-and-slab prior distributions to remove unnecessary exposures. The second is a standard BART prior that fits a model of the form $E(Y \vert \bX, \bW) = m(\bX, \bW)$ and places a BART prior on $m(\cdot)$ (BART). The third uses a smooth variant of BART for $m(\cdot)$ (SoftBART). Lastly, we use our proposed approach with the blocking scheme (SepBART), which separates the regression surface into separate functions and then uses either SoftBART or tsBART prior distributions for each function separately. Note that BKMR specifically assumes that the effect of covariates is linear, which is the case in our simulation studies, while it does not allow for treatment effect heterogeneity, which makes it misspecified in scenarios with interactions between exposures and covariates.

Figure \ref{fig:CATE} shows the simulation results for estimating $\tau_{\bw_1,\bw_0}(\bX)$. We find that our method is the only method that achieves the nominal 95\% coverage for all interaction scenarios and produces the smallest root mean squared error (RMSE) when there is treatment effect heterogeneity, regardless of its strength. While it is expected that BKMR does not perform well in the presence of interactions between covariates and exposures as it assumes no treatment effect heterogeneity, it is surprising that SepBART outperforms BKMR even when there is no heterogeneity and the main effect of covariates is linear, which is the situation that BKMR is designed for. The no-interaction scenario is the least favorable setting for SepBART as it forces an interaction structure on the model. However, even for this setting, SepBART shows comparable estimation performance to the best performing model in terms of RMSE, which indicates that our model is able to shrink the $h(\cdot)$ functions to zero when they are not required. It is also notable that SepBART improves on the original BART and SoftBART that do not separate the regression function into distinct components, which suggests that we benefit from the proposed model formulation and separation of the effects into different functions by providing balance between flexibility and efficiency.
\begin{figure}
    \centering
    \includegraphics[width=5 in]{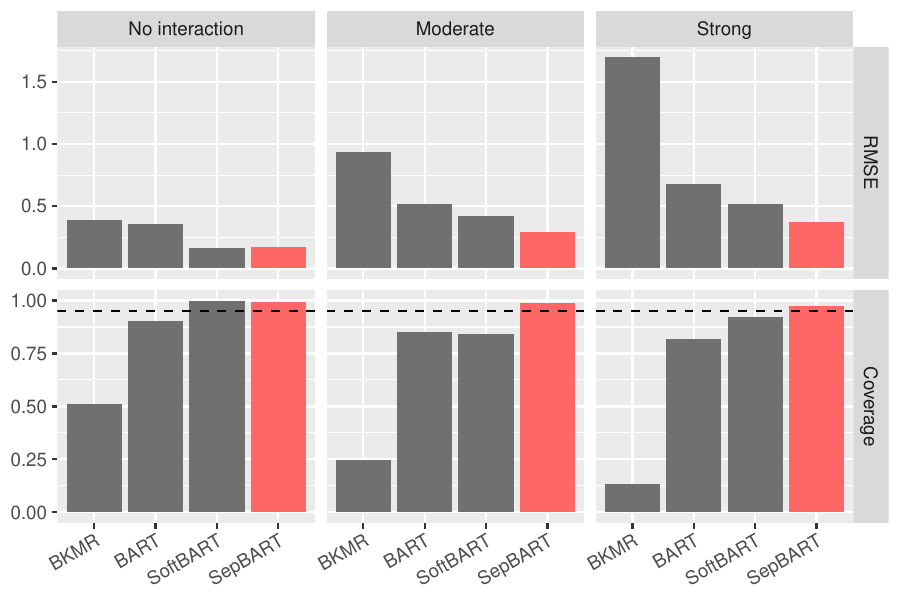}
    \caption{Simulation results for estimating CATE when $\bw_0$ and $\bw_1$ are given. The dashed line in the second row denotes nominal 95\% coverage.}
    \label{fig:CATE}
\end{figure}

We now assess the performance for estimating MTE-VIM introduced in Section \ref{sec: estimands}. As BKMR does not allow for effect heterogeneity, we focus on our approach only here. We calculate $\hat{\psi}_j$ for $j=1,2,...,5$ for each data set by taking the posterior mean of estimates and plot the distribution of $\hat{\psi}_j$ in Figure \ref{fig:VIM}. The true values of $\psi_1$ and $\psi_2$ are 0.72 and 0.28, respectively, and the others are zero. We see that for both interaction scenarios where the true total heterogeneity $\phi$ is 0.26 and 1.14, respectively (moderate and strong), the posterior mean of each estimate is concentrated near the true value. Along with estimating MTE-VIM, we can conduct a hypothesis test for the difference between two MTE-VIMs, i.e. testing $H_0: \psi_j=\psi_k$. This can be done by constructing the posterior distribution of $\psi_j-\psi_k$ and seeing whether or not the interval contains zero. We calculate how often the null hypothesis is rejected with level $\alpha=0.05$ over 300 data set replicates, and plot the empirical rejection rate of each testing pair in Figure \ref{fig:Test}. Since the true $\psi_1$ is far away from zero, it shows high rejection rates for comparing $\psi_1$ and the others (the first column of Figure \ref{fig:Test}). For $\psi_2$, which is non-zero, but not large as $\psi_1$, it produces slightly weaker power when interactions are moderate, but detects almost all differences when interactions are strong. For the last three columns where the null is true, we see rejection rates are controlled under the desired level $\alpha=0.05$ regardless of the scenario, though inference is somewhat conservative.

\begin{figure}
    \centering
    \includegraphics[width = 4.7 in]{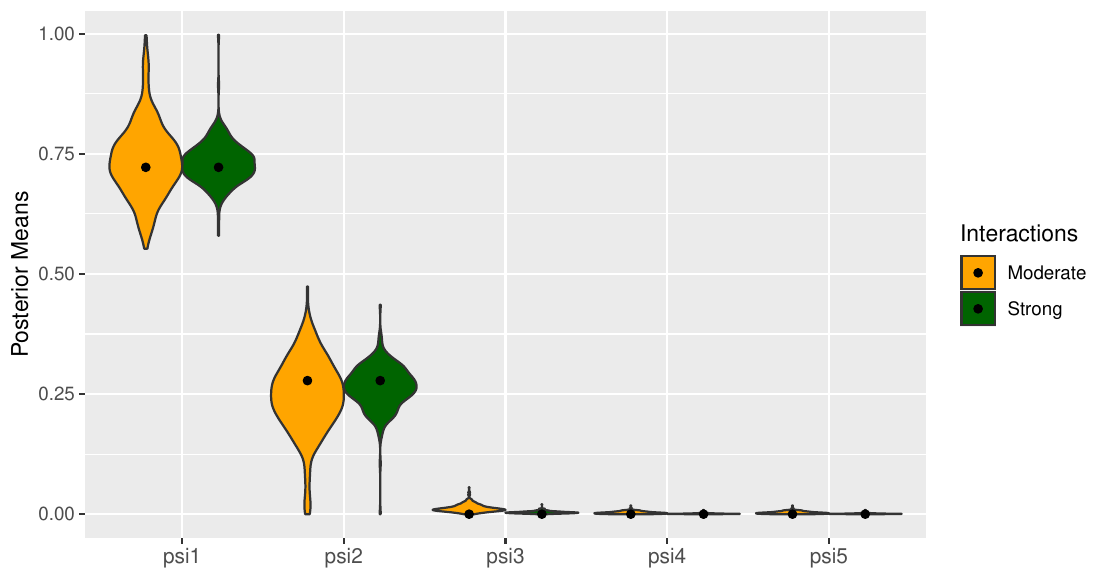}
    \caption{Violin plots of 300 posterior means of the variable importance measures, $\psi_j$ for $j=1,2,...,5$ under each of the two scenarios with heterogeneity. The black dots represent the true variable importance measures: 0.72 for $\psi_1$, 0.28 for $\psi_2$, and 0 for the others.}
    \label{fig:VIM}
\end{figure}

\begin{figure}
    \centering
    \includegraphics[width=5 in]{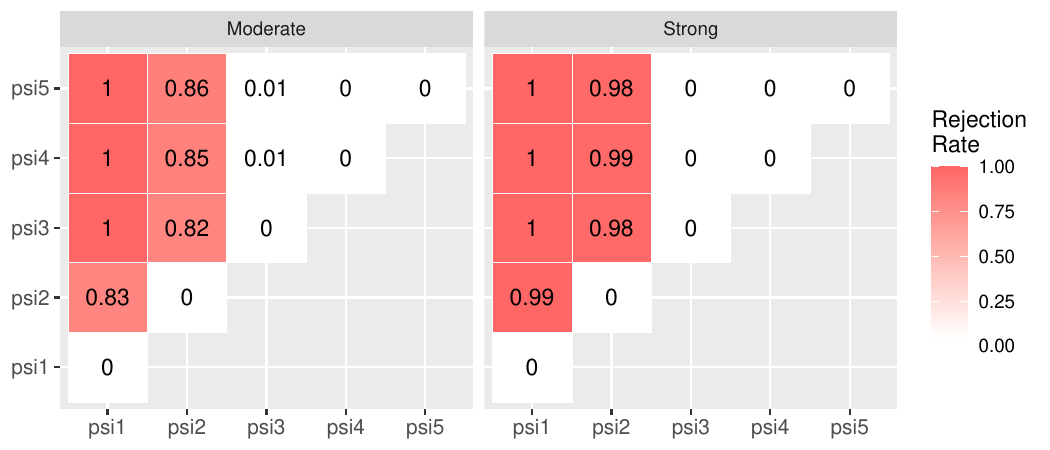}
    \caption{Rejection rates for each test where $H_0: \psi_i=\psi_j$ with the level $\alpha=0.05$.}
    \label{fig:Test}
\end{figure}

\section{Health effects of air pollution mixtures}
\label{sec:MedicareAnalysis}
We now use our proposed approach to gain important epidemiological insights about the effect of PM$_{2.5}$ components and ozone on mortality. We collect information on all Medicare beneficiaries in the United States above the age of 65 for the years 2000-2016. We observe a number of individual-level covariates for the Medicare beneficiaries, such as their age, race, sex, and whether they have dual eligibility to Medicaid, which is a proxy for low socioeconomic status. We also observe a number of area-level covariates unique to each zip code, sourced from the United States Census Bureau and the CDC's Behavioral Risk Factor Surveillance System, regardless of individuals' age. These area-level covariates consist of average body mass index, smoking rates, median household income, education, population density, and percent owner occupied housing. We also adjust for both temperature and humidity variables that are available from the National Climatic Data Center. 

We obtain zip-code level environmental exposure data from two distinct sources. We obtain estimates of total PM$_{2.5}$, ammonium, nitrates, and sulfate levels on a ($0.01^\circ \times 0.01^\circ$) monthly grid from the Atmospheric Composition Analysis Group \citep{van2019regional}. We obtain estimates of elemental carbon, organic carbon, and ozone on a 1km by 1km daily grid from the Socioeconomic Data and Applications Center \citep{di2019ensemble, di2021daily, requia2021daily}. We do not have exact residential addresses of individuals in Medicare and only know their residential zip-code, and therefore all exposures are aggregated to the yearly level at each zip-code. All individual and geographic-level covariates are also aggregated up to the zip-code level by taking their averages or proportions within each zip-code so that the outcome of interest is the annual mortality rate in each zip-code, which we define as 100 times the number of deaths in that zip-code for a particular year divided by the number of person-years in that zip-code.

We run our proposed approach as described in Section \ref{sec: estimation} targeting both marginal and heterogeneous treatment effects to evaluate the extent to which ambient air pollution affects mortality, and whether this effect varies by characteristics of zip-codes. We run our model for each year separately using the prior year exposures as the pollutants of interest, and therefore will present results for all years between 2000 and 2016. We run our MCMC algorithm for 10,000 iterations, discarding the first 4,000 as a burn-in, and thinning every twelfth sample. A detailed discussion on the convergence of MCMC sampling can be found in Appendix \ref{sec:Appendix_convergence}. Overall, we find that convergence diagnostics are very good for the average treatment effect across all years studied. Convergence is slightly worse for MTE-VIM values, though this is likely caused by multi-modality of the posterior distribution, rather than an issue of MCMC sampling, and is still within an acceptable range.

\subsection{Marginal effects of exposures}\label{sec: Data_marginal}

\begin{figure}
    \centering
    \includegraphics[width=5.5 in]{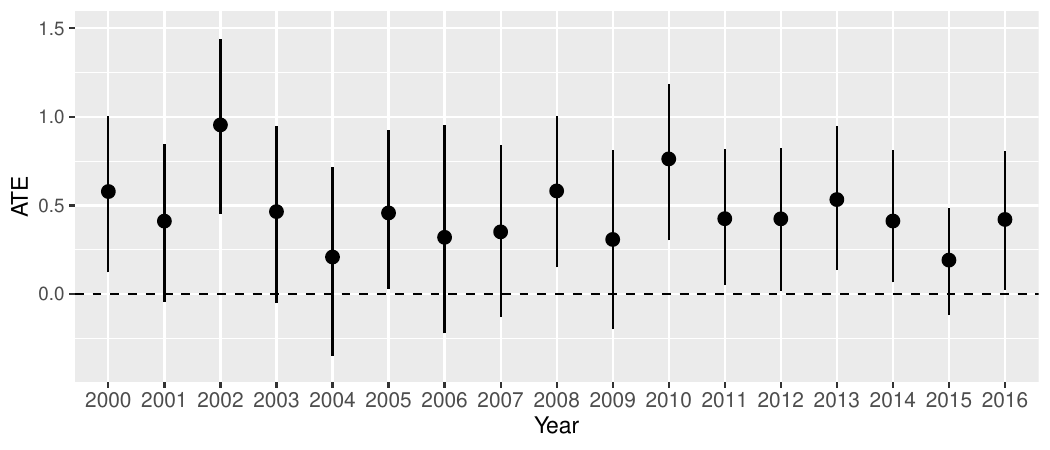}
    \caption{Posterior means and corresponding 95\% credible intervals for the average treatment effects on mortality rates when every pollutant simultaneously increases from its yearly first quartile to the third quartile. The dashed line represents no average treatment effect.}
    \label{fig:data_ATE}
\end{figure}

First, we investigate the average treatment effect of an increase in environmental mixtures, denoted as $\E\{Y(\bw_1)-Y(\bw_0)\}$, where $\bw_0$ and $\bw_1$ represent vectors of all exposures corresponding to the first and third quartiles for each exposure, respectively. Hence, a positive ATE would indicate a harmful effect on mortality due to an increase in environmental exposures. Figure \ref{fig:data_ATE} illustrates estimates of the ATE for each year. We consistently estimate a positive ATE, implying that increasing levels of all environmental exposures increases the mortality rate across the contiguous United States, which suggests a detrimental impact of environmental pollutants on human health.

As previously mentioned, the positivity assumption is a critical consideration when analyzing multivariate exposures, as highlighted in \cite{antonelli2024causal}. To address this, we assess the positivity assumption and examine a trimmed estimator that is more robust to positivity violations, with detailed analyses presented in Appendix \ref{sec:Appendix_positivity}. We evaluated the positivity assumption for each exposure and found that the majority of observations satisfy a univariate positivity criterion for each exposure simultaneously. Additionally, we examined a trimmed ATE, which targets the ATE for observations with the highest probability of being exposed to either $\bw_0$ and $\bw_1$. The results are very similar between the trimmed ATE and the original ATE, which suggests that our choice of $\bw_1$ and $\bw_0$ does not strongly violate the positivity assumption.

\subsection{Heterogeneity of the effect of PM$_{2.5}$ components}\label{sec:data_hetero}

While the average treatment effect provides insights into the health effects of air pollution, of even more interest is whether this effect varies across the population, as it is crucially important to understand which communities are at most risk to the detrimental effects of air pollution. Toward this goal, we first investigate the proposed MTE-VIM for each covariate to understand which characteristics drive the heterogeneity of the causal effect. The estimated total heterogeneity, $\phi$, averaged over all of the study years is 0.016, which implies that the standard deviation of $\tau_{\bw_0}(\bW,\bX)$ with respect to $\bX$ given $\bW=\bw$ is, on average, 0.13. Considering that we estimated an average ATE of 0.46 across study years in Figure \ref{fig:data_ATE}, this suggests there is a non-negligible amount of treatment effect heterogeneity. Figure \ref{fig:data_VIM_by_year} shows the posterior mean of the MTE-VIM corresponding to each covariate for each study year. 

We observe some degree of variation in the MTE-VIM for certain covariates over the study years. This is likely due to the inherent difficulty in estimating these parameters in situations with highly correlated exposures and covariates. For this reason, there will be more variability in the estimates of treatment effect heterogeneity and the corresponding MTE-VIMs compared with the marginal effect estimates seen in Section \ref{sec: Data_marginal}, which leads to more variable results across years. Despite this variability, there is a clear trend that both race and dual eligibility to Medicaid modify the treatment effect the most, followed by age. Figure \ref{fig:data_VIM_avg} shows the value of yearly variable importance metrics in Figure \ref{fig:data_VIM_by_year} when averaged over all study years, and we see that race, dual eligibility to Medicaid, and age achieved the largest variable importance measures, with overall means of 0.24, 0.22, and 0.1, respectively. This indicates that, on average, 24\% of the treatment effect heterogeneity can only be explained by race, even after controlling for other potential effect modifiers. In situations such as this with high variability around estimates of the importance of each covariate on treatment effect heterogeneity, we can also examine grouped MTE-VIMs. In Appendix \ref{sec:Appendix_GVIM}, we placed covariates into one of three groups and found consistent results across years showing that socioeconomic factors (which include race and dual eligibility to Medicaid) had the largest grouped MTE-VIM values.

In addition to identifying the characteristics of zip-codes that modify the treatment effect the most, it is important to understand the nature and direction of this heterogeneity to better understand which groups are most susceptible to air pollution mixtures. To do this, we examine the conditional average treatment effect, using the same reference exposure levels as in the previous section, denoted by $\bw_1$ and $\bw_0$. Specifically, we plot $\E\{Y(\bw_1) - Y(\bw_0) | \bX = \widetilde\bx_{(j)}\}$ as a function of $x_j$, where $\widetilde\bx_{(j)} = (\widebar{x}_{1}, ..., x_j, ..., \widebar{x}_p)$. This sets all covariates to their sample mean and only varies covariate $j$, so that we can examine whether the conditional average treatment effect increases or decreases with covariate $j$. Within the framework of our model and our identification restrictions, $\E\{Y(\bw_1) - Y(\bw_0) | \bX = \widetilde\bx_{(j)}\}$ simplifies to the difference in the interaction function, $h_j(\bw_1, x_j) - h_j(\bw_0, x_j)$, which is zero when $x_j$ equals the mean covariate level, $\widebar{x}_{j}$. Therefore, when this difference is positive, it implies that individuals with the given covariate level are more susceptible to increases in the environmental mixture compared to those with an average covariate level.

Figure \ref{fig:data_hetero} shows the conditional average treatment effect curves described above for the covariates with large MTE-VIMs. Across most years, we observed that the treatment effects of air pollution tend to decrease as the proportion of White populations in a zip code increases, while harmful effects are more pronounced in areas with smaller White populations. This trend suggests a potentially greater impact of air pollution on racial minorities. We found that the treatment effect increases significantly with higher rates of dual eligibility, indicating that the detrimental effects of air pollution are more severe in areas with lower socioeconomic status. We also find that areas with older individuals are more adversely affected by increases in pollution, though this effect is not as strong as that of dual eligibility to Medicaid. Overall, our study shows that results from the previous literature \citep{simoni2015adverse, bargagli2020causal} that focused solely on univariate PM$_{2.5}$ exposure extend to the more complex setting of multivariate air pollution mixtures. We find a harmful effect of air pollution on mortality, and that this effect is heterogeneous with larger effects in areas with low socioeconomic status, lower proportions of white individuals, and older individuals.

\begin{figure}
    \centering
    \includegraphics[width=5 in]{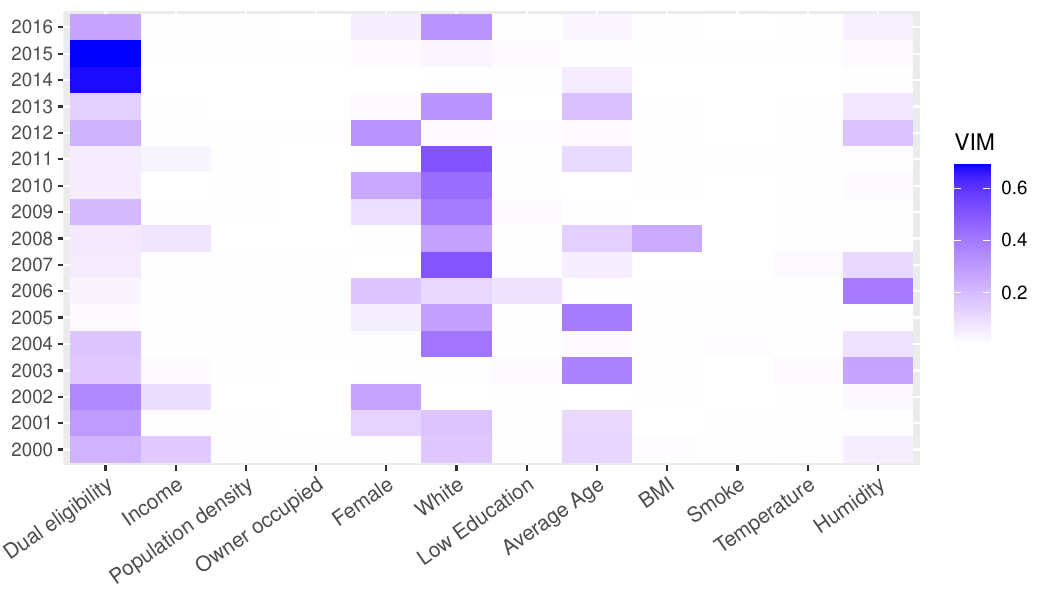}
    \caption{Heatmap of posterior means of the MTE-VIM by year.}
    \label{fig:data_VIM_by_year}
\end{figure}
\begin{figure}
    \centering
    \includegraphics[width=5 in]{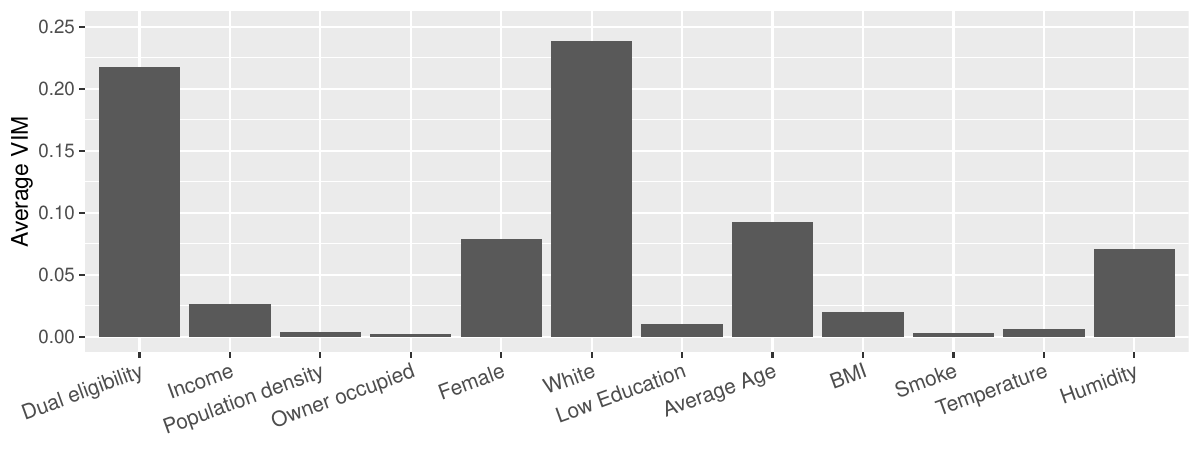}
    \caption{Estimates of the MTE-VIM averaged over the years 2000-2016.}
    \label{fig:data_VIM_avg}
\end{figure}
\begin{figure}
    \centering
    \includegraphics[width=5.5 in]{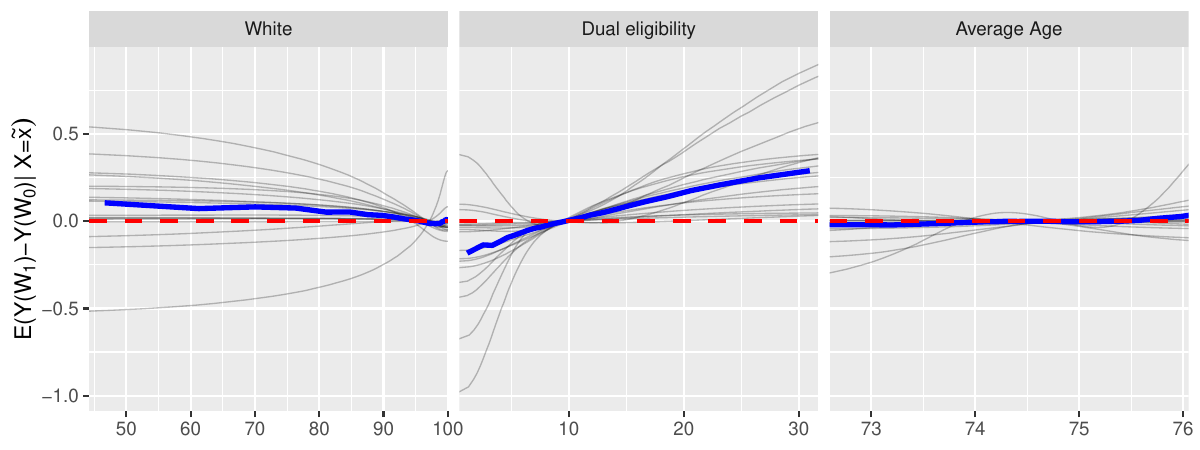}
    \caption{Estimates of conditional average treatment effects on mortality rates of increasing all pollutants from the first quartiles to the third quartiles as a function of each covariate, $\E\{Y(\bw_1) - Y(\bw_0) | \bX = \widetilde\bx_{(j)}\}$ where $\widetilde{\bx}_{(j)} = (\widebar{x}_{1}, ..., x_j, ..., \widebar{x}_p)$. Each thin gray curve represents the posterior mean for one year and the thick blue curve represents the average of posterior means over the study years. The red dashed line represents no treatment effect heterogeneity.}
    \label{fig:data_hetero}
\end{figure}

\section{Discussion}

In this paper, we proposed a novel approach to analyze and summarize complex, heterogeneous effects of multivariate continuous exposures. We developed new estimands for this setting, including a treatment effect variable importance measure, which is tailored to multivariate exposure scenarios and provides interpretable quantities that simplify the heterogeneity and provide important epidemiological insights about the effects of air pollution mixtures. To facilitate identification of our estimands, and to allow different strengths of regularization for each component of the model, we proposed a separation of the outcome model regression function, and integrated a targeted smoothing method into our model to obtain smooth estimates of the degree of heterogeneity by each covariate. Our theoretical results, coupled with simulation studies, provide strong empirical and analytical support for the efficiency and validity of the proposed model. We used the model to gain new insights into the causal impact of multivariate air pollution mixtures on mortality rates in the Medicare population. Our model estimates a detrimental impact of air pollution, consistent with previous literature, and shows that this effect is more pronounced in zip-codes with lower socioeconomic status, fewer white individuals, and older individuals.

While this paper provides strong evidence of heterogeneous treatment effects for environmental mixtures, and provides users with novel methodology for the multivariate, continuous treatment setting, there are certain limitations with our approach. For one, our definition of the MTE-VIM requires the user to choose a reference exposure level $\bw_0$. While we expect most reasonable choices of $\bw_0$ to lead to similar values of the MTE-VIM as we have seen in Appendix \ref{sec:Appendix_w0}, results could be sensitive to this choice in certain cases. To avoid violations of the positivity assumption and reliance on model extrapolation, we recommend users select $\bw_0$ from a region of the exposure space with sufficient support in the data.

An additional limitation of our approach is that we made the assumption that treatment effect heterogeneity was additive in the covariates, which allowed us to write $h(\bX,\bW)=\sumjo[p] h_j(X_j,\bW)$. We find this to be a reasonable assumption in most cases, and one that facilitates estimation in a difficult setting, but more complex forms of heterogeneity may not be captured well by this additive structure. For cases where treatment effect heterogeneity is expected to be non-additive in covariates, one could completely remove the assumption by fitting $h(\bX,\bW)$ with all covariates and exposures at once, or relax the assumption to allow for multiplicative interactions between two covariates and exposures. However, this would require adjustments to the proposed identification strategy and the interpretation of each functional component, and it is unclear how much this approach would impact estimation efficiency, particularly in cases with many covariates, as in our real data analysis.

Lastly, there are many future research directions to consider. For one, our focus has been on outcomes measured on a continuous scale, but a natural and important extension would be to allow for binary or count outcomes. The extension to binary outcomes with a probit or logistic link is straightforward using latent variables within our MCMC algorithm \citep{albert1993bayesian, polson2013bayesian}. While less trivial, BART has recently been extended to count outcomes \citep{murray2021log}, and future work could combine our proposed methodology within BART models for count outcomes. One other important direction would be to couple these estimation strategies with sensitivity analysis to unmeasured confounding, a common concern in observational studies such as this one. The recent literature has developed advancements in sensitivity analysis in multiple exposure settings for estimating average treatment effects \citep{zheng2021bayesian}, and these ideas could be extended to heterogeneous treatment effects seen here. Another interesting direction would be to combine the approach developed in this manuscript with methodology for optimal policy estimation, in order to provide environmental regulators increased information about the best practices for reducing pollution in the future in order to obtain the largest public health benefit from a proposed reduction in air pollution levels.

\begin{funding}
Research described in this article was conducted under contract to the Health Effects Institute (HEI), an organization jointly funded by the United States Environmental Protection Agency (EPA) (Assistance Award No. CR-83590201) and certain motor vehicle and engine manufacturers. The contents of this article do not necessarily reflect the views of HEI, or its sponsors, nor do they necessarily reflect the views and policies of the EPA or motor vehicle and engine manufacturers. The computations in this paper were run on the FASRC FASSE cluster supported by the FAS Division of Science Research Computing Group at Harvard University.
Heejun Shin, Danielle Braun and Kezia Irene were also funded by the following grants from the National Institute of Health; R01AG066793, RF1AG074372, RF1AG071024, R01ES030616, R01ES034373, RF1AG080948, R01ES034021
\end{funding}

\begin{supplement}
\stitle{Proofs and Additional Results}
\sdescription{\cite{shin2024supplement} contains the proofs of the posterior contraction rates and identifiability of the proposed model, and additional simulation and analysis results.}
\end{supplement}
\begin{supplement}
\stitle{R Package}
\sdescription{The SepBART R package is available to implement the proposed approach. The most up-to-date version is available at \url{https://github.com/hshin111/SepBART}.}
\end{supplement}


\appendix

\section{Proof of Theorem \ref{Thm: Convergence}}\label{sec: proof}
Before showing the proof of the main result, we must first show the conditions on the SoftBART prior distribution that are required, which we refer to as Condition P. Specifically, we need that
\begin{enumerate}[label= (P\arabic*)]
    \item There exists positive constants $(C_{M1}, C_{M2})$ such that the prior on the number of trees $T$ in the ensemble is $\Pi(T = t) = C_{M1} \exp\{-C_{M2} t \log t\}$.
    \item A single bandwidth parameter $\tau_t \equiv \tau$ is used and its prior satisfies $\Pi(\tau \ge x) \le C_{\tau1} \exp(-x^{C_{\tau 2}})$ and $\Pi(\tau^{-1} \ge x) \le C_{\tau 3} \exp(-x^{C_{\tau 4}})$ for some positive constants $C_{\tau 1}, \ldots, C_{\tau 4}$ for all sufficiently large $x$, with $C_{\tau 2}, C_{\tau 4} < 1$. Moreover, the density of $\tau^{-1}$ satisfies $\pi_{\tau^{-1}}(x) \ge C_{\tau 5} e^{-C_{\tau 6} x}$ for large enough $x$ and some positive constants $C_{\tau 5}$ and $C_{\tau 6}$.
    \item The prior on the splitting proportions is $s \sim \ndirichlet(a / r^\xi, \ldots, a / r^\xi)$ for some $\xi > 1$ and $a < 0$, where $r$ is the dimension of the input space (i.e., $p$ for $f$, $q$ for $g$, and $p + q$ for $h$).
    \item The $\mu_{t\ell}$'s are \iid from a density $\pi_{\mu}(\mu)$ such that $\pi_{\mu}(\mu) \ge C_{\mu 1} e^{-C_{\mu 2} |\mu|}$ for some coefficients $C_{\mu 1}, C_{\mu 2}$. Additionally, there exists constants $C_{\mu 3}, C_{\mu 4}$ such that $\Pi(|\mu_{t \ell}| \ge x) \le C_{\mu 3} \exp(-x^{C_{\mu 4}})$ for all $x$.
    \item Let $D_t$ denote the depth of tree $\mathcal{T}_t$. Then $\Pi(D_t = k) > 0$ for all $k = 0, 1, \ldots, 2d$ and $\Pi(D_t > d_0) = 0$ for $d_0 \ge 2d$ where $d$ represents the maximum number of covariates on which the function depends.
    \item The gating function $\psi: \reals \to [0,1]$ of the SoftBART prior is such that $\sup_x |\psi'(x)| < \infty$ and the function $\rho(x) = \psi(x) \{1 - \psi(x)\}$ is such that $\int \rho(x) \ dx > 0$, $\int |x|^m \, \rho(x) < \infty$ for all integers $m \ge 0$, and $\rho(x)$ can be analytically extended to some strip $\{z : |\Im(z)| \le U\}$ in the complex plane.
\end{enumerate}
All conditions above can be achieved by using the default SoftBART prior described in \cite{linero2018bayesian}. We consider the concentration of the posterior with respect to the norm $\|\cdot\|_n$ defined by $\|\mu\|_n^2 = \frac{1}{n} \sum_{i=1}^n \mu(\bX_i, \bW_i)^2$. We will first verify the following conditions (Condition C) for the prior and ground truth $\mu_0$ when Condition A and Condition P hold to establish the result.
\begin{enumerate}[label= (C\arabic*)]
    \item The prior satisfies $\Pi(\|\mu - \mu_0\|_n \le \epsilon_n) \ge e^{-C n \epsilon_n^2}$ for sufficiently large $n$ and constant $C$ independent of $(n, p, q)$.
    \item There exists a \textit{sieve} $\sF_1 \subseteq \sF_2 \subseteq \cdots$ such that
$$
  \log N(\sF_n, \bar \epsilon_n, \|\cdot\|_n) \le C_N n \bar \epsilon_n^2,
$$
for sufficiently large $n$, where $C_N$ is a constant independent of $(n, p, q)$ and $\bar \epsilon_n$ is a constant multiple of $\epsilon_n$. Here, $N(\sF, \delta, \|\cdot\|_n)$ denotes the $\delta$-covering number, which is the smallest number of $\delta$-balls required to cover $\sF$ with respect to $\|\cdot\|_n$.
    \item For the same $\sF_n$ from C2 and same constant $C$ from C1, we have
$$
  \Pi(\mu \notin \sF_n) \le e^{-(C + 4)n \epsilon_n^2},
$$
for sufficiently large $n$.
\end{enumerate}
Note that it suffices to consider $\|\cdot\|_\infty$ in place of $\|\cdot\|_n$ for each of (C1)--(C3) because trivially $\|\mu\|_n \le \|\mu\|_\infty$. Following from the results of \cite{ghosal2007convergence} (see, in particular, Theorem 1, Theorem 4, Section 7.7, and the comments following Theorem 2), verifying Condition C is sufficient to prove Theorem \ref{Thm: Convergence}.

Assuming Conditions A and P, we can state the following useful results, which follow from the proof of Theorem 2 of \cite{linero2018bayesian}.
\begin{proposition}\label{prop: pc}
Suppose that Condition A holds, and independent priors are specified for $(f, g, h)$ such that Condition P holds for each. Then for any $c > 0$ there exist constants and $C_f$ independent of $(n, p, q)$ such that for sufficiently large $n$ we have
$$
  \Pi(\|f - f_0\|_\infty \le c \epsilon_{nf}) \ge e^{-C _fn \epsilon_{nf}^2}.
$$
Similarly, there exist $C_g$ and $C_h$ such that $\Pi(\|g - g_0\|_\infty \le c \epsilon_{ng}) \ge e^{-C_g n \epsilon_{ng}^2}$ and $\Pi(\|h - h_0\| \le c \epsilon_{nh}) \ge e^{-C_h n \epsilon_{nh}^2}$. 
\end{proposition}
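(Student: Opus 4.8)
The plan is to deduce Proposition~\ref{prop: pc} as a direct specialization of the prior small-ball estimate that underlies the proof of Theorem~2 in \cite{linero2018bayesian}, applied separately to each of the three independent priors. I fix the component $f$; the arguments for $g$ and $h$ are identical after replacing $(\alpha_x, p, d_x, \epsilon_{nf})$ by $(\alpha_w, q, d_w, \epsilon_{ng})$ and $(\alpha_h, p+q, d_h, \epsilon_{nh})$ and invoking the corresponding instance of Condition~P. The first ingredient is approximation theory for soft trees: since $f_0 \in \mathcal{S}(\alpha_x, p, d_x)$ is $\alpha_x$-H\"older and depends on only $d_x$ of its $p$ arguments, the soft decision-tree approximation bound of \cite{linero2018bayesian} supplies, for each $\delta>0$, an ensemble $f_\delta$ built from $T_\delta \lesssim \delta^{-d_x/\alpha_x}$ soft trees, each of depth at most $2d_x$ and splitting only on the $d_x$ active coordinates, with bandwidth $\tau_\delta$ and leaf values of uniformly bounded total magnitude, such that $\|f_\delta - f_0\|_\infty \lesssim \delta$. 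I take $\delta$ equal to a small constant multiple of $\epsilon_{nf}$.

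The second step bounds $\Pi(\|f - f_0\|_\infty \le c\,\epsilon_{nf})$ below by the prior mass of a parameter-space neighborhood of $f_\delta$, using that the soft-tree map is Lipschitz in the number of trees, the splitting coordinates, the bandwidth, and the leaf parameters (made quantitative by the gating-function smoothness (P6)). Concretely one intersects the events that $T = T_\delta$, which (P1) charges with probability $\ge e^{-C T_\delta\log T_\delta}$; that each tree has the prescribed depth $\le 2d_x$ and splits on the correct coordinates, which (P3) and (P5) charge at a cost of order $d_x\log p + T_\delta$, the Dirichlet scaling $a/r^\xi$ with $\xi>1$ being exactly what keeps the coordinate-selection cost from degrading the constant; that $\tau$ lies in an $O(\epsilon_{nf})$-window of $\tau_\delta$, which the tail and density conditions (P2) charge at a cost of strictly lower order; and that every leaf parameter lies in an $O(\epsilon_{nf}/T_\delta)$-window of its target, which (P4) charges with probability $\ge e^{-C T_\delta}(\epsilon_{nf}/T_\delta)^{L_\delta}$, where $L_\delta\lesssim T_\delta$ is the total number of leaves. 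On this intersection $\|f - f_\delta\|_\infty \le (c - \mathrm{const})\epsilon_{nf}$, hence $\|f-f_0\|_\infty\le c\,\epsilon_{nf}$.

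The third step is bookkeeping. Summing the logarithms gives $-\log\Pi(\cdots)$ of order $T_\delta\log T_\delta + d_x\log p + L_\delta\log(T_\delta/\epsilon_{nf})$, whose dominant term is of order $T_\delta\log n$. With $T_\delta\asymp\epsilon_{nf}^{-d_x/\alpha_x}$, $\epsilon_{nf}=n^{-\alpha_x/(2\alpha_x+d_x)}\log(n)^{t_f}$ and $t_f=\alpha_x(d_x+1)/(2\alpha_x+d_x)$, one checks $n\epsilon_{nf}^2\asymp\epsilon_{nf}^{-d_x/\alpha_x}\log(n)^{d_x+1}=T_\delta\log(n)^{d_x+1}$, which dominates $T_\delta\log n$; the term $d_x\log p$ is of strictly smaller order and, together with the lower-order bandwidth cost, is absorbed into the constant as in \cite{linero2018bayesian}. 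Therefore $\Pi(\|f-f_0\|_\infty\le c\,\epsilon_{nf})\ge e^{-C_f n\epsilon_{nf}^2}$ with $C_f$ depending only on $(\alpha_x,d_x)$ and the prior hyperconstants, hence independent of $(n,p,q)$; the exponent $t_f$ is moreover the one that makes $\epsilon_{nf}$ simultaneously the sieve-entropy rate required for (C2), so that it is the posterior contraction rate. Finally, by independence of the priors on $(f,g,h)$ the three bounds hold jointly, which is the assertion.

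The main obstacle — and the only genuinely technical content — is the combination of Steps~2 and~3: the soft-tree approximation lemma together with a careful, dimension-uniform accounting of the prior mass lost along each parameter coordinate, so that $C_f$ does not blow up with $p$ or $q$. All of this is carried out for a single SoftBART prior in the proof of Theorem~2 of \cite{linero2018bayesian}; under Conditions~A and~P the present model is an instance of theirs for each of $f$, $g$, and $h$, so nothing beyond this verification and the trivial conjunction over the three independent components is needed.
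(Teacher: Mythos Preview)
Your proposal is correct and takes essentially the same approach as the paper: the paper does not give an independent proof of this proposition but simply states that it ``follow[s] from the proof of Theorem 2 of \cite{linero2018bayesian},'' which is exactly what you do, with the added benefit that you sketch the approximation and prior-mass accounting that lives inside that proof. Your explicit identification of the role of each piece of Condition~P (tree-count tail (P1), Dirichlet splitting proportions (P3), depth bound (P5), bandwidth tails and density (P2), leaf-parameter tails (P4), gating smoothness (P6)) and the bookkeeping showing $T_\delta\log n \lesssim n\epsilon_{nf}^2$ is more detailed than what the paper records, but it is precisely the content of the cited argument.
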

The following lemma now verifies Condition (C1).
\begin{lemma}
    Suppose that Condition (A1) holds with $\sigma^2$ fixed and that the conditions of Proposition \ref{prop: pc} also hold. Then there exists a constant $C$ such that for sufficiently large $n$
$$
  \Pi(\|\mu - \mu_0\|_n \le \epsilon_n) \ge e^{-C n \epsilon_n^2}
$$
where $\epsilon_n = \max(\epsilon_{nf}, \epsilon_{ng}, \epsilon_{nf})$.
\end{lemma}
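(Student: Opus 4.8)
The plan is to reduce the claim to Proposition \ref{prop: pc} via the triangle inequality together with the elementary fact that the empirical norm is dominated by the sup norm, $\|\mu\|_n \le \|\mu\|_\infty$. Writing $\mu = c + f + g + h$ and $\mu_0 = c^* + f^* + g^* + h^*$, one has
$$
\|\mu - \mu_0\|_n \le \|\mu - \mu_0\|_\infty \le |c - c^*| + \|f - f^*\|_\infty + \|g - g^*\|_\infty + \|h - h^*\|_\infty .
$$
Consequently the event $\{|c - c^*| \le \epsilon_n/4\}\cap\{\|f - f^*\|_\infty \le \epsilon_n/4\}\cap\{\|g - g^*\|_\infty \le \epsilon_n/4\}\cap\{\|h - h^*\|_\infty \le \epsilon_n/4\}$ is contained in $\{\|\mu - \mu_0\|_n \le \epsilon_n\}$, so it suffices to lower bound the prior probability of this intersection. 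Passing to $\|\cdot\|_\infty$ here is the one conceptual move: it decouples the three function-space pieces, which would otherwise be entangled through the design points in $\|\cdot\|_n$.

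Next I would apply Proposition \ref{prop: pc} with its free constant set to $1/4$ to each of $f$, $g$, and $h$ separately. Since $\epsilon_n = \max(\epsilon_{nf}, \epsilon_{ng}, \epsilon_{nh})$ we have $\epsilon_{nf} \le \epsilon_n$, hence $\{\|f - f^*\|_\infty \le \epsilon_{nf}/4\} \subseteq \{\|f - f^*\|_\infty \le \epsilon_n/4\}$, and therefore
$$
\Pi(\|f - f^*\|_\infty \le \epsilon_n/4) \ge \Pi(\|f - f^*\|_\infty \le \epsilon_{nf}/4) \ge e^{-C_f n \epsilon_{nf}^2} \ge e^{-C_f n \epsilon_n^2},
$$
the last inequality using $\epsilon_{nf}^2 \le \epsilon_n^2$; the identical argument yields $\Pi(\|g - g^*\|_\infty \le \epsilon_n/4) \ge e^{-C_g n \epsilon_n^2}$ and $\Pi(\|h - h^*\|_\infty \le \epsilon_n/4) \ge e^{-C_h n \epsilon_n^2}$. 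For the intercept, the prior on $c$ has a density bounded away from zero on a neighborhood of $c^*$ (or, equivalently, $c$ may simply be absorbed into the SoftBART component $f$), so $\Pi(|c - c^*| \le \epsilon_n/4)$ is of order $\epsilon_n$, which is at least $e^{-C_c n \epsilon_n^2}$ for $n$ large.

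Finally, since independent priors are placed on the components, the prior probability of the intersection factorizes, giving $\Pi(\|\mu - \mu_0\|_n \le \epsilon_n) \ge e^{-(C_c + C_f + C_g + C_h)n\epsilon_n^2}$; setting $C = C_c + C_f + C_g + C_h$, which is independent of $(n,p,q)$ because each summand is, establishes the lemma and hence Condition (C1). There is no genuine obstacle here — the argument is bookkeeping — but the points deserving care are: (i) dominating $\|\cdot\|_n$ by $\|\cdot\|_\infty$ to split the problem across the independent components; (ii) the observation that each of $\epsilon_{nf},\epsilon_{ng},\epsilon_{nh}$ is bounded by $\epsilon_n$, so Proposition \ref{prop: pc} can be invoked uniformly at the single rate $\epsilon_n$ while the bound $e^{-C n\epsilon_n^2}$ is preserved; and (iii) the treatment of the constant term, handled either by a mild prior assumption or by absorption into $f$.
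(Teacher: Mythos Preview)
Your proposal is correct and follows essentially the same route as the paper: dominate $\|\cdot\|_n$ by $\|\cdot\|_\infty$, use the triangle inequality to split $\mu-\mu_0$ into its independent components, apply Proposition~\ref{prop: pc} to each piece, and combine via prior independence with $C$ the sum of the component constants. The only cosmetic difference is that the paper absorbs the intercept and splits into three pieces with $\epsilon_n/3$, whereas you keep $c$ separate and use $\epsilon_n/4$; your version is in fact slightly more explicit about the step $\epsilon_{nf}\le\epsilon_n$ that justifies invoking Proposition~\ref{prop: pc} at the common rate.
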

\begin{proof}
    First, we note that by the triangle inequality and prior independence that
$$
\begin{aligned}
  \Pi(\|\mu - \mu_0\|_\infty \le \epsilon_n)
  &\ge
  \Pi(\|f - f_0\|_\infty \le \epsilon_n / 3) 
  \Pi(\|g - g_0\|_\infty \le \epsilon_n / 3) 
  \Pi(\|h - h_0\|_\infty \le \epsilon_n / 3).
\end{aligned}
$$
Applying Proposition \ref{prop: pc} with $c = 1/3$ and the definition of $\epsilon_n$ this gives
$$
  \Pi(\|\mu - \mu_0\|_\infty \le \epsilon_n)
  \ge e^{-(C_f + C_g + C_h) n \epsilon_n^2} = e^{-C n \epsilon_n^2}
$$
where $C = C_f + C_g + C_h$.
\end{proof}
Next, we prove the existence of an appropriate sieve to establish (C2)--(C3). We use the following result, which is proven in the supplementary material of \cite{linero2018bayesian}, to do this.
\begin{lemma}
    Suppose that Condition P holds. For universal constants $(D, E)$ (depending only on the prior) and for sufficiently large $(T, d, U, \sigma_2)$ and sufficiently small $(\sigma_1, \delta)$, there exists a set $\sG_f$ with the following properties:

1. \textnormal{\textbf{Covering entropy control:}} $\log N(\sG_f, D \delta, \|\cdot\|_\infty) \le d \log p + 3 T 2^{d_0} \log\left(\frac{d \sigma^2_2  T 2^{d_0} U}{\delta}\right)$

2. \textnormal{\textbf{Complement probability bound:}} If $\Pi_f$ satisfies Condition P then
$$
\begin{aligned}
  \Pi_\mu(\sG_f^c) 
  &\le C'_{M1} \exp\{-C'_{M2} T \log(T)\} + 
    \\&\qquad 2^{d_0} T \left[\exp\{-E d \log p\} + C_{\mu 1} \exp\{-U^{C_{\mu 2}}\}\right] + 
    \\&\qquad 2 C_{\tau_1} \, \exp\{-\sigma_1^{-C_{\tau 2}}\} + 2 C_{\tau_3} \exp\{-\sigma_2^{C_{\tau_4}}\},
\end{aligned}
$$

Similarly, sets $\sG_g$ and $\sG_h$ exist with analogous statements holding for $\Pi_g$ and $\Pi_h$. 
\end{lemma}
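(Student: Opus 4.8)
The plan is to construct the sieve by following the supplementary material of \cite{linero2018bayesian}, adapted to our notation. First I would define $\sG_f$ as the set of SoftBART functions whose parameters obey a finite list of constraints: the ensemble contains at most $T$ trees; each tree $\mathcal{T}_t$ has depth at most $d_0$, hence at most $2^{d_0}$ leaves and at most $2^{d_0}$ internal nodes; the splits across all trees involve at most $d$ of the $p$ input coordinates; every leaf parameter satisfies $|\mu_{t\ell}| \le U$; and the shared bandwidth satisfies $\sigma_1 \le \tau \le \sigma_2$. The sets $\sG_g$ and $\sG_h$ are defined identically with $p$ replaced by $q$ and $p+q$, respectively.

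For the covering-entropy bound I would discretize the parameters of functions in $\sG_f$. The combinatorial part -- which of the $p$ coordinates are used -- contributes at most $\binom{p}{d} \le p^d$ choices, giving the $d\log p$ term. For the continuous parameters (split locations, leaf values $\mu_{t\ell}$, and the bandwidth $\tau$) I would establish that a SoftBART function restricted to $\sG_f$ is Lipschitz in each such parameter with a constant that is polynomial in $T$, $2^{d_0}$, $d$, $U$, and $\sigma_2^2$; this uses the bounded derivative of the gating function (Condition P6) together with the truncation $\tau \ge \sigma_1$, which keeps the soft splits from varying arbitrarily fast. Since the ensemble has at most of order $T 2^{d_0}$ continuous parameters, placing a grid of spacing proportional to $\delta/(\text{Lipschitz constant})$ on each of them over its bounded range produces a $D\delta$-net in $\|\cdot\|_\infty$ for a universal constant $D$, and the logarithm of the cardinality of this net is exactly of the advertised form $3 T 2^{d_0} \log\!\big( d \sigma_2^2 T 2^{d_0} U / \delta \big)$.

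For the complement-probability bound I would take a union bound over violation of each defining constraint, invoking the corresponding part of Condition P. The event that the number of trees exceeds $T$ has probability at most $C'_{M1} \exp\{-C'_{M2} T \log T\}$ by P1; the event that some tree has depth exceeding $d_0 \ge 2d$ has probability zero by P5; the event that more than $d$ coordinates are active is controlled by the sparse Dirichlet prior on splitting proportions (P3), contributing $\exp\{-E d \log p\}$ per leaf; the event $|\mu_{t\ell}| > U$ has probability at most $C_{\mu 1}\exp\{-U^{C_{\mu 2}}\}$ by P4; and $\tau < \sigma_1$ and $\tau > \sigma_2$ have probabilities at most $C_{\tau 1}\exp\{-\sigma_1^{-C_{\tau 2}}\}$ and $C_{\tau 3}\exp\{-\sigma_2^{C_{\tau 4}}\}$ by P2. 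Multiplying the per-tree and per-leaf terms by the $2^{d_0} T$ leaves in the ensemble and summing gives the stated bound; the statements for $\sG_g$ and $\sG_h$ follow verbatim after the dimension substitution.

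The main obstacle is the Lipschitz analysis of soft trees in the bandwidth: unlike hard decision trees, which are piecewise constant and hence trivially covered, SoftBART functions depend continuously -- and, as $\tau \downarrow 0$, increasingly rapidly -- on $\tau$, so the covering argument genuinely needs the lower truncation $\tau \ge \sigma_1$ and a careful accounting of how the effective Lipschitz constant degrades as $\sigma_1$ shrinks and as $T$, $2^{d_0}$, and $U$ grow. This is precisely the delicate step carried out in \cite{linero2018bayesian}; since our prior for $f$ (and likewise for $g$ and $h$) satisfies the same Condition P, their argument transfers directly.
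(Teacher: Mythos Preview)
Your proposal is correct and aligned with the paper: the paper does not prove this lemma at all but simply invokes it as a result ``proven in the supplementary material of \cite{linero2018bayesian}'', and your sketch accurately outlines the structure of that proof (sieve defined by parameter truncations, entropy via Lipschitz discretization of the continuous parameters plus the combinatorial coordinate-selection factor, complement probability via a union bound matched to each clause of Condition P), ultimately deferring the delicate bandwidth-Lipschitz step to the same reference.
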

Using this lemma, we can construct an appropriate sieve as follows. First, define the Cartesian product $\sF = \sG_f \times \sG_g \times \sG_h$ for appropriately large/small values of $(T, d, U, \sigma_2, \sigma_1, \epsilon)$. Then we immediately have
$$
  \log N(\sF, D\delta, \|\cdot\|_\infty) \le
  \log N(\sG_f, D\delta, \|\cdot\|_\infty) + 
  \log N(\sG_g, D\delta, \|\cdot\|_\infty) + 
  \log N(\sG_h, D\delta, \|\cdot\|_\infty).
$$
Following \cite{li2022adaptive}, for a large constant $\kappa$ to be selected later, take
$$
  \sigma_1^{-C_{\tau 2}} = \sigma_2^{C_{\tau 4}} = U^{C_{\mu 2}} = 
  \kappa n \epsilon_n^2,
$$
$d = \lfloor \kappa n \epsilon_n^2 / \log(p + q + 1) \rfloor$ and $T = \lfloor \kappa n \epsilon_n^2 / \log n\rfloor$. By substituting these constants into the covering entropy bound, we can obtain the inequality
$$
  \log N(\sF, D \epsilon_n, \|\cdot\|_\infty)
  \le \kappa' n \epsilon_n^2
$$
for an appropriately chosen $\kappa' > \kappa$ and large enough $n$, which verifies (C2). Additionally, by plugging these choices into the complementary probability bound, we can also make
$$
  \Pi(\sF^c) \le 
  \Pi(f \in \sG_f^c) + \Pi(g \in \sG_g^c) + \Pi(h \in \sG_h^c)
  \le 
  \exp\{-(C + 4) n \epsilon_n^2\}
$$
for sufficiently large $n$ by taking $\kappa$ sufficiently large as desired. (C1)--(C3) are verified.

\section{Proof of identifiability}\label{sec: identifiability proof}

We prove that each function component in Equation \ref{eq: model} is identifiable by incorporating the shifting strategy introduced in Section \ref{sec: identification}.
\begin{proposition}
    Suppose $\mathcal{S}_{\bx_0,\bw_0}$ denotes a space of a tuple $\big(c, f(\cdot), g(\cdot), h(\cdot)\big)$ that satisfies $f(\bx_0)=g(\bw_0)=0$ and $h\{\bx_0,\bw\}=h\{\bx,\bw_0\}=0$ for all $\bx\in\reals^p$ and $\bw\in\reals^q$. If
    \begin{enumerate}
        \item $\big\{\big(c, f(\cdot), g(\cdot), h(\cdot)\big), \big(\widetilde{c}, \widetilde{f}(\cdot), \widetilde{g}(\cdot), \widetilde{h}(\cdot)\big)\big\}\subset\mathcal{S}$, and
        \item $c + f(\bx) + g(\bw) + h(\bx,\bw)=\widetilde{c} + \widetilde{f}(\bx) + \widetilde{g}(\bw) + \widetilde{h}(\bx,\bw)$ for all $\bx\in \reals^p$ and $\bw\in\reals^q$,
    \end{enumerate}
    then $\big(c, f(\cdot), g(\cdot), h(\cdot, \cdot)\big)=\big(\widetilde{c}, \widetilde{f}(\cdot), \widetilde{g}(\cdot), \widetilde{h}(\cdot, \cdot)\big)$.
\end{proposition}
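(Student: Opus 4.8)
The plan is to argue by subtraction and then exploit the anchoring constraints by evaluating at the baseline points $\bx_0$ and $\bw_0$. First I would introduce the difference functions $\bar c = c - \widetilde c$, $\bar f = f - \widetilde f$, $\bar g = g - \widetilde g$, and $\bar h = h - \widetilde h$. Since both tuples lie in $\mathcal{S}_{\bx_0,\bw_0}$, each difference inherits the defining constraints by linearity: $\bar f(\bx_0) = 0$, $\bar g(\bw_0) = 0$, and $\bar h(\bx_0, \bw) = \bar h(\bx, \bw_0) = 0$ for all $\bx \in \reals^p$, $\bw \in \reals^q$. Assumption 2 then yields the key identity
$$\bar c + \bar f(\bx) + \bar g(\bw) + \bar h(\bx, \bw) = 0 \qquad \text{for all } \bx \in \reals^p,\ \bw \in \reals^q.$$

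The second step is to peel off the components one at a time by specializing the arguments, in a carefully chosen order. Setting $\bx = \bx_0$ and $\bw = \bw_0$ in the identity and invoking all four constraints simultaneously forces $\bar c = 0$. Next, holding $\bw = \bw_0$ fixed while letting $\bx$ range freely, the terms $\bar g(\bw_0)$ and $\bar h(\bx, \bw_0)$ both vanish, so the identity collapses to $\bar f(\bx) = 0$ for every $\bx$; by the symmetric argument with $\bx = \bx_0$ fixed and $\bw$ free, $\bar g(\bw) = 0$ for every $\bw$. Finally, substituting $\bar c = 0$, $\bar f \equiv 0$, and $\bar g \equiv 0$ back into the identity leaves $\bar h(\bx, \bw) = 0$ for all $(\bx, \bw)$, which is precisely the asserted equality $(c, f, g, h) = (\widetilde c, \widetilde f, \widetilde g, \widetilde h)$.

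There is essentially no hard step here: the whole argument is a direct ``evaluate at the anchor'' computation, and the only thing to be careful about is ordering the three specializations so that each isolates exactly one component. I would note in passing that the proposition concerns only the coarse four-way split; if one additionally wanted identifiability of the individual interaction pieces $h_j(X_j,\bW)$ in the additive representation $h = \sum_{j=1}^p h_j$ used in the main text, that is a strictly stronger claim that would require either variation independence of the covariate coordinates or a separate anchoring of each $h_j$, and it is not needed for the statement proved here.
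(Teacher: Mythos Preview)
Your proposal is correct and follows essentially the same approach as the paper: evaluate the identity at the anchor points $(\bx_0,\bw_0)$ to isolate the constant, then hold one argument at its baseline to isolate each main-effect function, and finally read off the interaction. The only cosmetic difference is that you first pass to the differences $\bar c,\bar f,\bar g,\bar h$ while the paper works directly with the original functions, but the logic and the sequence of specializations are the same.
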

\begin{proof}
Let $\bx=\bx_0$ and $\bw=\bw_0$. Since both tuples belong to $\mathcal{S}_{\bx_0,\bw_0}$, the second condition becomes $c=\widetilde{c}$. Next, we fix $\bx=\bx_0$ while allowing $\bw$ to vary. Then, we have $c + g(\bw) =\widetilde{c}  + \widetilde{g}(\bw)\Leftrightarrow c + g(\bw) =c  + \widetilde{g}(\bw) \Leftrightarrow g(\bw)= \widetilde{g}(\bw)$ for all $\bw\in\reals^q$. Analogously, setting $\bw=\bw_0$ yields $f(\bx)= \widetilde{f}(\bx)$ for all $\bx\in\reals^p$. Combining the above results produces $c + f(\bx) + g(\bw)=\widetilde{c} + \widetilde{f}(\bx) + \widetilde{g}(\bw)$ for all $\bx\in \reals^p$ and $\bw\in\reals^q$. Hence, it follows that $h(\bx,\bw)=\widetilde{h}(\bx,\bw)$ for all $\bx\in \reals^p$ and $\bw\in\reals^q$. Setting $\bx_0=\E(\bX)$ and $\bw_0=\E(\bW)$ concludes the proof.

\end{proof}

\section{Additional simulation studies}\label{sec:AdditionalSim}

In this section, we present and discuss additional simulation results. First we present results showing the performance of our blocking scheme for calculating MTE-VIMs, and then present results in a simulation study where the additive treatment effect assumption does not hold. 

\subsection{Examining performance of blocking scheme}\label{sec:AppendixBlocking}

In this section, we present additional simulation results to support the use of the blocking scheme in calculating variable importance measures, as described in Section \ref{sec: estimation}. We employ the same simulation scenario as that in Section \ref{sec: simulation}, where strong interactions exist between covariates and exposures. Recall that the true parameter of interest $(\psi_1,\psi_2,\psi_3,\psi_4,\psi_5)=(0.722,0.278,0,0,0)$.

We first compare three methods for estimating the parameter: 1) using a randomly chosen 100 observations from the original data set (Small); 2) using all observations (Full); 3) partitioning observations into ten blocks and averaging the resulting ten estimates (Block). Table \ref{tab: Blocking RMSE} presents the relative RMSE compared to the blocking scheme averaged over 300 data replicates. A relative RMSE greater than 1 indicates a larger RMSE than the blocking scheme. We also present the histograms of the point estimates of $\psi_2$ for each method in Figure \ref{fig: block compare}. Using just one subset of the original observations results in a significantly larger RMSE as it has a larger variability in estimating the parameter, which is expected since it doesn't incorporate all data points. The blocking scheme, which simply averages estimates from ten small subsets, shows a comparable estimation performance to the ``Full" method which requires significantly more intensive computation.

\begin{table}[h]
    \centering
    \begin{tabular}{c||c|c|c}
        Estimand & Small & Full & Block\\
        \hline
        $\psi_1$ & 1.29 & 0.99 & 1 \\
        $\psi_2$ & 1.66 & 0.99 & 1 \\
        $\psi_3$ & 1.23 & 1.04 & 1 \\
        $\psi_4$ & 1.59 & 1.17 & 1 \\
        $\psi_5$ & 1.66 & 0.99 & 1
    \end{tabular}
    \caption{Relative RMSE of estimating MTE-VIM.}
    \label{tab: Blocking RMSE}
\end{table}
\begin{figure}[h]
    \centering
    \includegraphics[width=4 in]{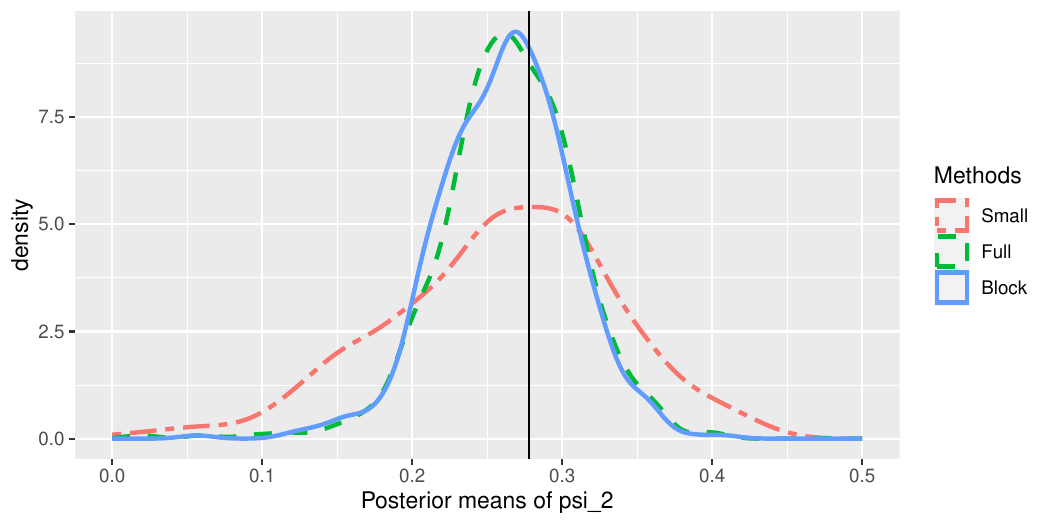}
    \caption{Histograms of 500 posterior means of the variable importance measures, $\psi_2$. The vertical line represents the true value of 0.278.}
    \label{fig: block compare}
\end{figure}

For a detailed comparison of the estimators with and without the blocking scheme, Figure \ref{fig:blockfull} shows the distributions of posterior means of the MTE-VIM and the rejection rates of the null hypotheses for both methods. This comparison is conducted under the same simulation settings as those in Section \ref{sec: simulation}, with the sample size reduced to 1000 due to the computational constraints of running the full method with larger sample sizes. There are effectively no differences between the two methods, supporting the validity of using the blocking scheme in our analysis.

\begin{figure}
    \centering
    \includegraphics[width=0.9\linewidth]{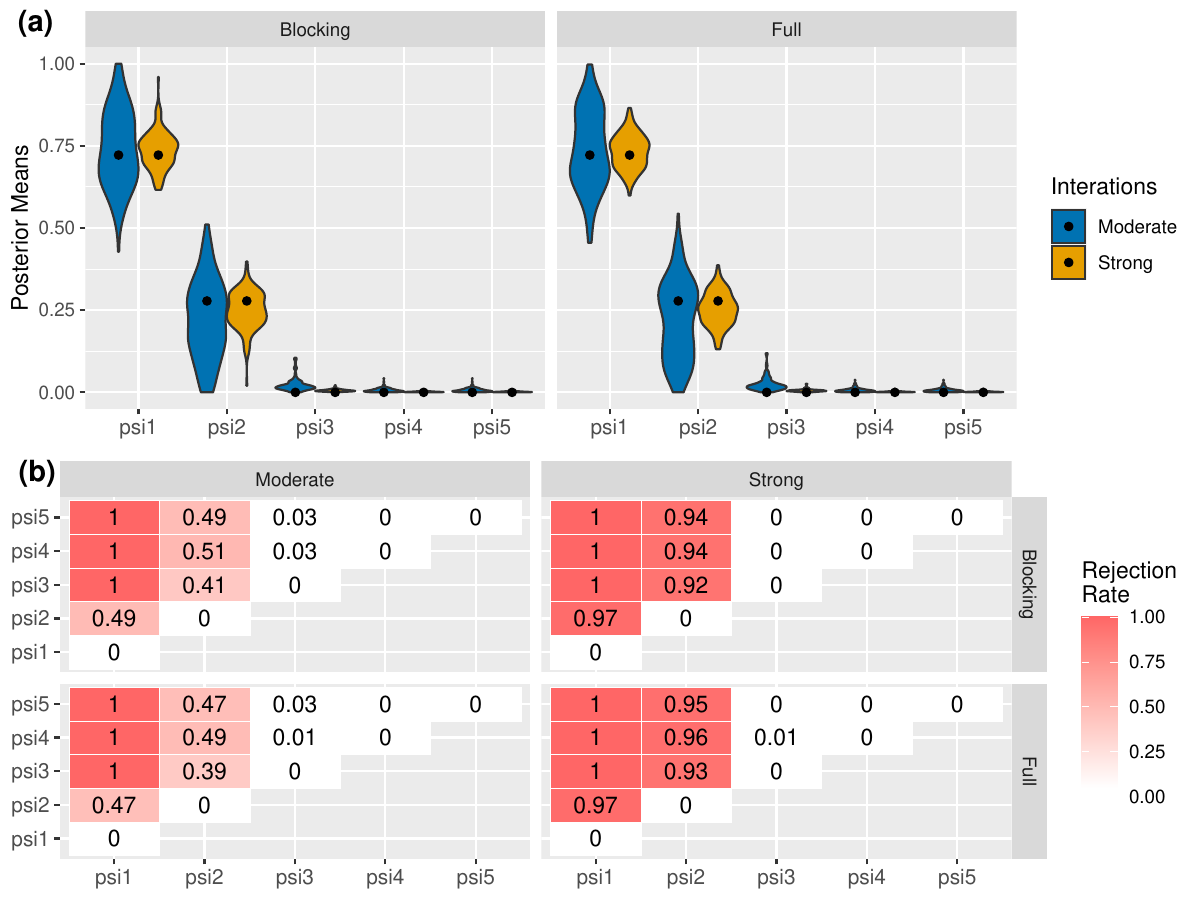}
    \caption{Comparing the performance of MTE-VIM estimation methods with and without the blocking scheme. (a) Violin plots of 300 posterior means of the variable importance measures. The black dots represent the true variable importance measures; (b) Rejection rates for each test where $H_0: \psi_i=\psi_j$ with the level $\alpha=0.05$.}
    \label{fig:blockfull}
\end{figure}

\subsection{Non-additive treatment effect heterogeneity}\label{sec:AppendixNonAdditive}

Our model presented in Section \ref{sec: estimation} assumes that the interactions between covariates and exposures are additive in covariates. While this assumption is generally mild in practice, it is important to assess how robust our model is to violations of the additive interaction assumption. To do this, we consider two non-additive interaction scenarios in which two covariates jointly modify the treatment effect while all other data-generating processes remain the same as those described in Section \ref{sec: simulation}:
\begin{align*}
    &\text{Violation 1: } h(\bX,\bW) = 0.5\cos(X_1)\cos(X_2)g(W)&\\
    &\text{Violation 2: } h(\bX,\bW) = 0.35\indct{X_1<1}\indct{X_2<1}g(W)&
\end{align*}
For both violation scenarios, the true MTE-VIM is $(\psi_1,\psi_2,\psi_3,\psi_4,\psi_5)=(0.5,0.5,0,0,0)$.

\begin{figure}
    \centering
    \includegraphics[width=0.75\linewidth]{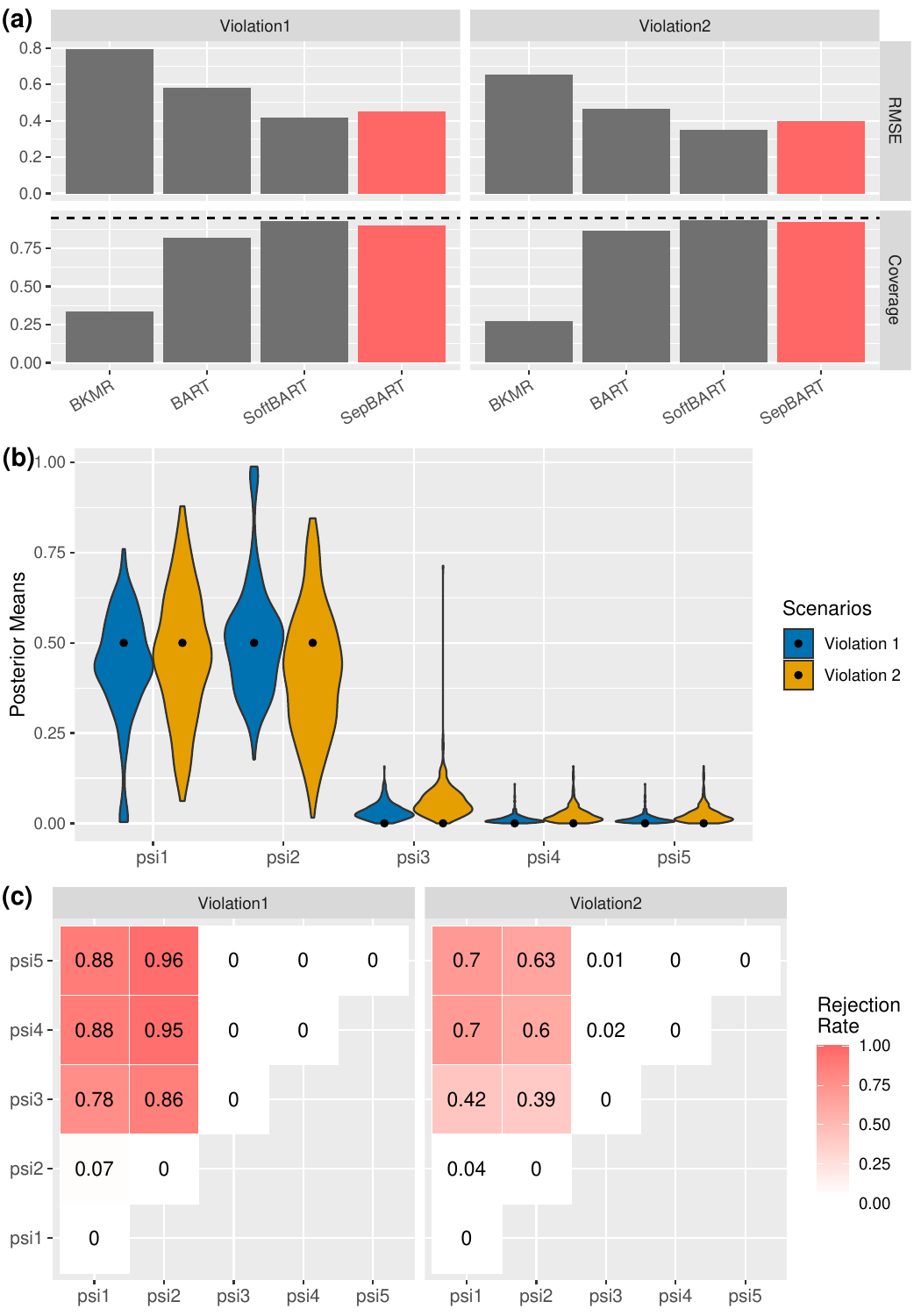}
    \caption{Comparing the performance of the proposed model when the additive interaction assumption is violated. (a) RMSE and coverage for CATE described in Section \ref{sec: simulation}; (b) Violin plots of 300 posterior means of the variable importance measures. The black dots represent the true variable importance measures; (c) Rejection rates for each test where $H_0: \psi_i=\psi_j$ with the level $\alpha=0.05$.}
    \label{fig:violation}
\end{figure}
Panel (a) in Figure \ref{fig:violation} shows the performance of CATE estimation for our model compared to competitors. We observe that even when the additive interaction assumption is violated, our model achieves a similar RMSE to the existing BART models that do not make this assumption, and performs better than BKMR, which does not allow for treatment effect heterogeneity. Regarding the estimation of MTE-VIM, Panel (b) shows that our model detects the true MTE-VIM well, though with a heavier tail compared to when the interaction is additive in covariates. In Panel (c), we observe that for both violation scenarios, the rejection rates are well controlled. Overall, the findings in this section suggest that although our model relies on the additive interaction structure, it performs reasonably well even when the assumption is moderately violated.

\section{Additional data analysis results}\label{sec:AppendixAnalysis}

In this section, we present and discuss additional results from the Medicare analysis. These include convergence diagnostics, examining robustness to the choice of $\boldsymbol{w}_0$, examining grouped variable importance metrics instead of single covariate MTE-VIMs, and an assessment of the positivity assumption in our analysis. 

\subsection{Convergence of the model}\label{sec:Appendix_convergence}

It is important to check the convergence of the MCMC algorithm for our data analysis, which involves many covariates and exposures with complex relationships. In Section \ref{sec:MedicareAnalysis}, we ran three chains with 10,000 MCMC iterations, discarding the first 4,000 as a burn-in, and thinning every twelfth sample. Figure \ref{fig:ATETrace} presents a trace plot of ATE for one year of data analysis, showing that all three chains overlap without any noticeable pattern. Similar patterns were observed for all other years of analysis. Another way to check the convergence of the model is to examine the potential scale reduction factor (PSRF) \citep{gelman1992inference}. The PSRF should be close to 1 if MCMC chains converged to the target distribution and a PSRF value below 1.2 is considered as evidence of convergence \citep{brooks1998general}. In our analysis, the PSRF for estimating the ATE is 1.013. Since the calculation of the ATE involves all components of our model, this result suggests good overall model convergence.

\begin{figure}[h]
    \centering
    \includegraphics[width=0.8\linewidth]{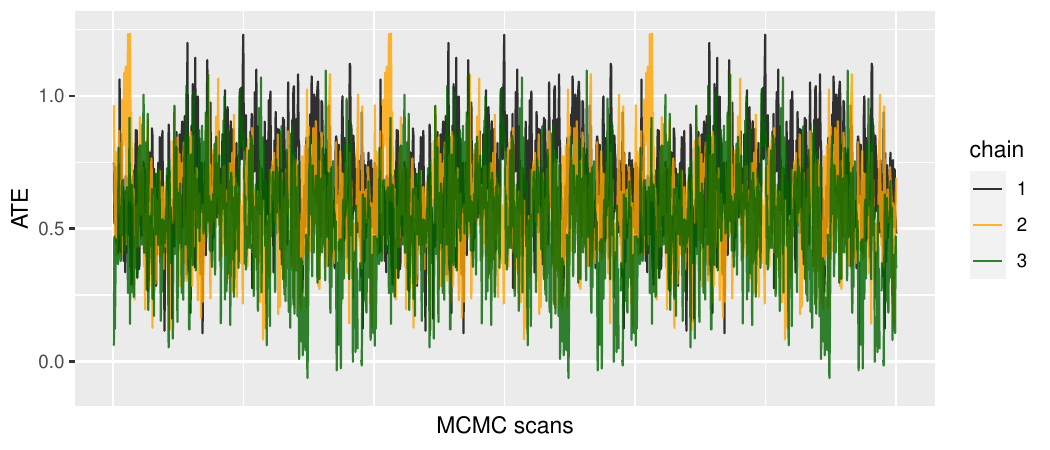}
    \caption{Trace plot of the ATE for one year of the Medicare analysis.}
    \label{fig:ATETrace}
\end{figure}

We also examined the PSRF for MTE-VIM, and the median PSRF across all years of analysis is shown for each covariate in Table \ref{tab:PSRF_MTEVIM}. While we observed an overall increase in PSRF for MTE-VIM compared to ATE, the PSRF is well-controlled for all covariates except for dual eligibility to Medicaid. This is expected, given the inherent challenges in isolating the effect of a single covariate in complex nonparametric models, particularly in the presence of highly correlated exposures and covariates, as is the case in our Medicare analysis. Unlike the overall conditional mean function, $\E\{Y(\bw)\vert \bX\}$, which is easier to estimate, MTE-VIMs depend heavily on the individual $h_j(\cdot)$ functions for each covariate, which are more difficult to identify in settings with strong correlations among covariates and exposures. The posterior distribution of MTE-VIMs likely has multiple modes, which can result in similar values of $\E\{Y(\bw)\vert \bX\}$, and therefore the ATE, but slightly different values of the MTE-VIMs.

\begin{table}[ht]
\centering
\begin{tabular}{r|r}
\hline
Covariates &PSRF \\
  \hline
Dual eligibility & 1.33  \\ 
  Income  & 1.09 \\ 
  Population density & 1.11 \\ 
  Owner occupied & 1.06\\ 
  Female  & 1.11 \\ 
  White &1.14  \\ 
  Low Education &1.10  \\ 
  Average Age  &1.11\\ 
  BMI &1.12\\ 
  Smoke  &1.09\\ 
  Temperature &1.07\\ 
  Humidity  &1.11\\ 
   \hline
\end{tabular}
    \caption{The median across all years of analysis of the potential scale reduction factor of each MTE-VIM.}
    \label{tab:PSRF_MTEVIM}
\end{table}

\subsection{Robustness to choices of $w_0$}\label{sec:Appendix_w0}

The proposed MTE-VIM requires the user to select a reference exposure level, $\bw_0$. While one can generally expect that the resulting MTE-VIM does not vary significantly within a reasonable range of $\bw_0$ values, we assess this by conducting the same data analysis as in Section \ref{sec:MedicareAnalysis} with two alternative choices of $\bw_0$. Specifically, our original choice of $\bw_0$ is the 25th quantile of each exposure, and in this additional analysis, we also consider the 50th and 75th quantiles. Figure \ref{fig:diffw0} displays the average MTE-VIM for each choice of $\bw_0$, and the results show that the MTE-VIM remains consistent across different choices of $\bw_0$ for all covariates, as expected.
\begin{figure}[h]
    \centering
    \includegraphics[width=0.95\linewidth]{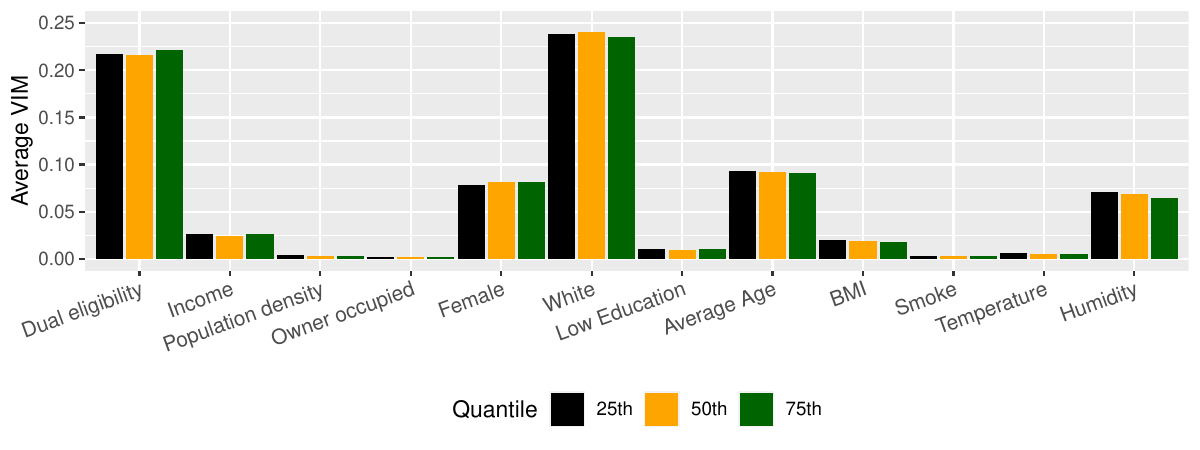}
    \caption{The average MTE-VIM with alternative $\bw_0$ values. The 25th quantile is used in the manuscript.}
    \label{fig:diffw0}
\end{figure}

\subsection{Incorporating the grouped variable importance measures}\label{sec:Appendix_GVIM}

Because MTE-VIM, $\{\psi_j\}_{j=1}^p$ could be affected by the correlated covariates as discussed in Section \ref{sec: estimands}, one could consider a grouped MTE-VIM, $\psi_s$, where $s$ is a subset of $\{1,...,p\}$. To illustrate how the grouped MTE-VIM can be applied in data analysis, we categorize the covariates into three groups: 1) Socioeconomic variables (dual eligibility, median household income, population density, percent owner-occupied housing, percent female, percent low education), 2) Risk factors (average age, mean BMI, smoking rate), and 3) Climate variables (temperature, humidity). The results are presented in Figure \ref{fig:GVIM}. We found that the socioeconomic variable group achieves a high MTE-VIM across all years, with an average of 0.7, suggesting that the majority of treatment effect heterogeneity can be explained solely by the socioeconomic variables. This finding is not surprising, as the socioeconomic variable group contains the largest number of covariates, some of which individually achieve high MTE-VIM values, such as race and dual eligibility to Medicaid.
\begin{figure}[h]
    \centering
    \includegraphics[width=0.8\linewidth]{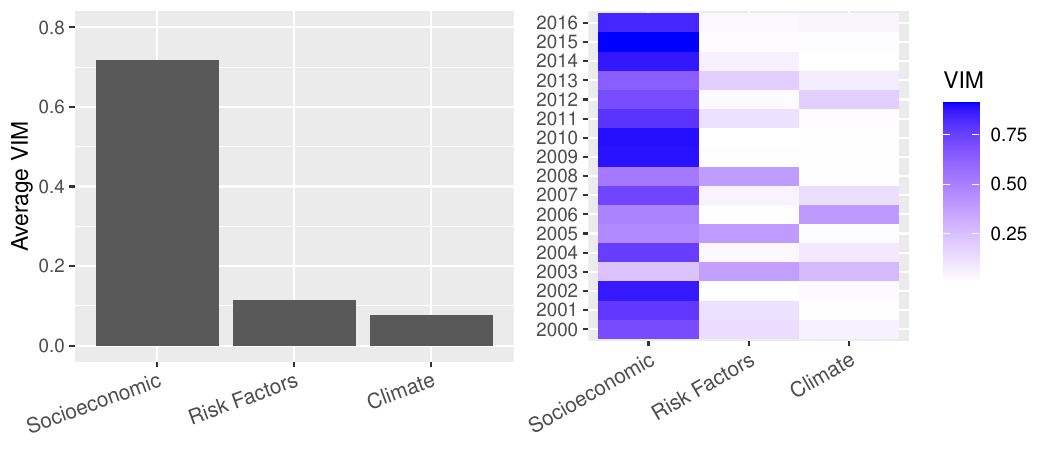}
    \caption{The average across years of the grouped MTE-VIM (left), and the grouped MTE-VIM for each year (right).}
    \label{fig:GVIM}
\end{figure}

\subsection{Positivity and a trimmed estimator}\label{sec:Appendix_positivity}

In our analysis in Section \ref{sec:MedicareAnalysis}, we examine $\E\{Y(\bw_1)-Y(\bw_0)\}$, where $\bw_0=(w_{01},...,w_{0q})$ and $\bw_1=(w_{11},...,w_{1q})$ represent vectors of exposures corresponding to the first and third quartiles for each exposure, respectively. One common concern when analyzing multivariate, continuous exposures is the difficulty of assessing the positivity assumption. In this section, we marginally check the positivity assumption, specifically at our choice of $\bw_0$ and $\bw_1$.

\begin{table}[t]
    \centering
    \begin{tabular}{c|c|c|c|c|c|c||c}
        \pmtpf & Ozone  &  Elemental Carbon   & Organic Carbon &  Ammonium &  Nitrate  & Sulfate & All Exposures \\
        \hline
        0.93 &0.95 &0.87& 0.91 &0.93& 0.97& 0.88 & 0.74
    \end{tabular}
    \caption{The proportion of observations satisfying the univariate positivity criterion for each exposure. The last column indicates the proportion of observations satisfying all univariate criteria simultaneously.}
    \label{tab:positivity_marginal}
\end{table}
We begin by checking the positivity assumption for each exposure separately. For the $j$-th exposure, we evaluate the following two probabilities: $p_{0j} = P(W_j \in \mathcal{R}(w_{0j}) | \bX = \bx)$ and $p_{1j} = P(W_j \in \mathcal{R}(w_{1j}) | \bX = \bx)$, where $\mathcal{R}(w)$ represents an interval containing $w$. These probabilities, $p_{0j}$ and $p_{1j}$, correspond to the probabilities of being near the exposure levels of interest $w_{0j}$ and $w_{1j}$, conditional on the covariates, and are expected to be non-zero if the positivity assumption holds. We consider the positivity assumption for the $j$-th exposure to be satisfied if $\min(p_{0j}, p_{1j}) > \delta$, where $\delta$ is a chosen threshold. Table \ref{tab:positivity_marginal} shows the proportion of observations that meet this univariate positivity criterion, averaged over the study years. For this, we set $\mathcal{R}(w_{0j}) = [r_{j,0.15}, r_{j,0.35}]$ and $\mathcal{R}(w_{1j}) = [r_{j,0.65}, r_{j,0.85}]$, where $r_{j,a}$ denotes the $100 \times a$-th quantile of the $j$-th exposure, and $\delta = 0.01$. We assumed that the marginal distribution of each exposure conditional on the covariates follows a normal distribution, with the mean and variance estimated using a third-degree polynomial linear regression. We observed that every exposure had at least 88\% of the observations satisfied the univariate criteria, indicating that most observations have sufficient probability of being exposed to both the neighborhood of $w_{0j}$ and $w_{1j}$ for each exposure. Extending this further, we can check what proportion of observations had this univariate positivity criteria hold for all exposures simultaneously. This means that $p_{zj} > \delta$ for all $j = 1, ..., q$ and for both $z = 0$ and $z = 1$. We found that 74\% of our observations met this more restrictive threshold, suggesting that the positivity assumption is reasonably satisfied in our analysis.

Although assessing positivity using the marginal distribution of each exposure, as described in the previous paragraph, provides some indication of the plausibility of the positivity assumption for multivariate exposures, it does not fully confirm the assumption, which involves the joint distribution of exposures given covariates. To address this more complicated issue, we define $p_{0} = P(\bW \in \mathcal{R}(\bw_0) | \bX = \bx)$ and $p_{1} = P(\bW \in \mathcal{R}(\bw_1) | \bX = \bx)$, where $\mathcal{R}(\bw)$ is a hyperrectangle around $\bw$, and examine $\min(p_{0}, p_{1})$. Determining a reasonable threshold for positivity violations in the multivariate case is not straightforward, as this issue has not been explored fully in the causal inference literature for multivariate, continuous treatments. Despite this, we can assess robustness to this positivity assumption by constructing an alternative, trimmed estimand that targets the ATE among observations with larger values of $\min(p_{0}, p_{1})$, which is more robust to positivity violations. These types of trimming estimators have been explore extensively for binary treatments, \citep{crump2009dealing, yang2018asymptotic} and have recently been used for univariate, continuous treatments \citep{branson2023causal}. We extend these ideas to the multivariate setting seen here.  Specifically, we define the trimmed ATE as:
$$\E\{Y(\bw_1)-Y(\bw_0)\vert \min(p_{0}, p_{1}) > \delta_T \}$$
where $\delta_T$ is set to the median of $\min(p_{0}, p_{1})$ across the entire sample. In other words, we focus on the observations with the highest probabilities of having exposures near $\bw_0$ and $\bw_1$, as these are least likely to violate the positivity assumption. The trimmed ATE can be estimated using the following plug-in estimator:
$$\dfrac{\sum_{i=1}^n\indct{\min(\widehat{p}_{i0}, \widehat{p}_{i1}) > \delta_T}\widehat{\E}\{Y(\bw_1)-Y(\bw_0)\vert \bX=\bx_i\}}{\sum_{i=1}^n\indct{\min(\widehat{p}_{i0}, \widehat{p}_{i1}) > \delta_T}}.$$
The estimated original and trimmed ATE are presented in Figure \ref{fig:Trimmed}. While the trimmed estimates are slightly smaller (by approximately 0.05) than the original ATE estimates, the overall similarity between the two suggests that the results obtained are reasonable and not affected largely by positivity violations.

\begin{figure}
    \centering
    \includegraphics[width=0.8\linewidth]{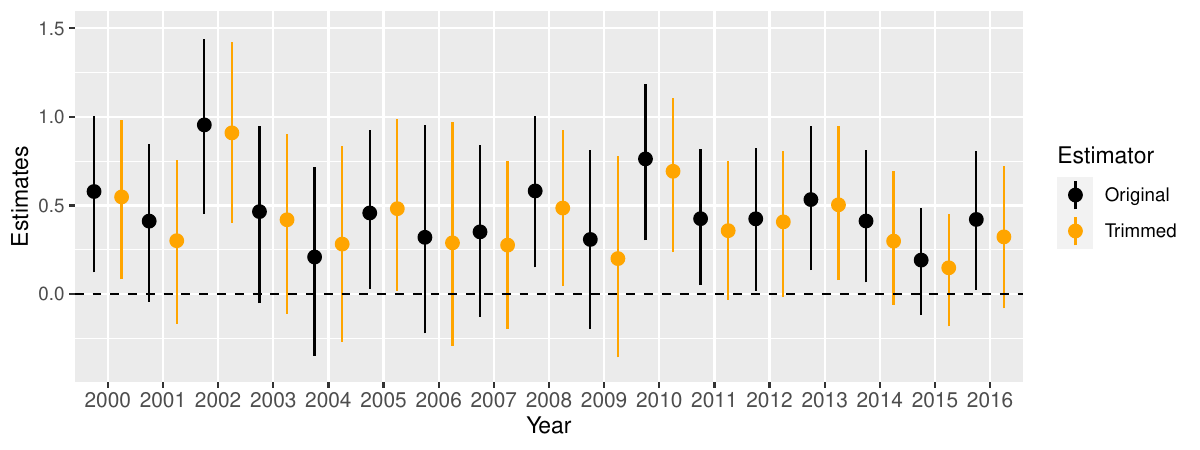}
    \caption{Posterior means and corresponding 95\% credible intervals for the original ATE and for the trimmed ATE when every pollutant simultaneously increases from its yearly first quartile to the third quartile. The dashed line represents no average treatment effect.}
    \label{fig:Trimmed}
\end{figure}

\bibliographystyle{imsart-nameyear} 
\bibliography{ref}       
\end{document}